\newcommand{\hide}[1]{} 
\newcommand{\etal}{\textit{et al.}\@\xspace}
\numberwithin{equation}{section}
\definecolor{dkgreen}{rgb}{0,0.3,0}
\definecolor{gray}{rgb}{0.5,0.5,0.5}
\definecolor{mauve}{rgb}{0.58,0,0.82}
\definecolor{light-gray}{gray}{0.80}
\lstdefinelanguage{spmd}{
  morekeywords = {
       loc, bit, bool, true, false
     , implements, harness
     , null
     , assert, assume
     , else
     , find, fix, fold, for, forall, function
     , generator, gen
     , if, while, int, float, bool, string
     , loop, simple, cond, val
     , fork, join
     , nil, null, none, new, malloc
     , option, or
     , ref, return
     , spmdfork, nprocs, spmdtransfer
     , void
     , concrete, sym
     , requires, ensures
     , invariant, decreases 
     , conj, exp
     , init, stmt},
  literate=
    {-}{--}1,
  morecomment=[l]{//}
}
\renewcommand{\scriptsize}{\fontsize{8.5}{9}\selectfont}
\newcommand{\conf}[1]{}
\newcommand{\name}{\textsc{ConcSolver}~}
\newcommand{\pre}{\textit{pre}}
\newcommand{\post}{\textit{post}}
\newcommand{\trans}{\textit{\textbf{trans}}}
\newcommand{\inv}{\textit{inv}}
\newcommand{\ite}{\textsf{ite}}
\newcommand{\vecx}{\vec{x}}
\newcommand{\vecy}{\vec{y}}
\newcommand{\vectt}{\vec{t}}
\newcommand{\vecc}{\vec{c}}
\newcommand{\fast}{\textit{\textbf{fast-trans}}}
\newcommand{\branches}{\textit{Brchs}}
\begin{document}

\mainmatter  % start of an individual contribution

% first the title is needed
\title{Reconciling Enumerative and Symbolic Search in Syntax-Guided Synthesis}

% a short form should be given in case it is too long for the running head
%\titlerunning{Lecture Notes in Computer Science: Authors' Instructions}

% the name(s) of the author(s) follow(s) next
%
% NB: Chinese authors should write their first names(s) in front of
% their surnames. This ensures that the names appear correctly in
% the running heads and the author index.
%
\author{Kangjing Huang\inst{1}%
%\thanks{Please note that the LNCS Editorial assumes that all authors have used
%the western naming convention, with given names preceding surnames. This determines
%the structure of the names in the running heads and the author index.}%
\and Xiaokang Qiu\inst{1}
\and Qi Tian\inst{2}
\and Yanjun Wang\inst{1}}
%\author{}
%
%\authorrunning{Lecture Notes in Computer Science: Authors' Instructions}
% (feature abused for this document to repeat the title also on left hand pages)

% the affiliations are given next; don't give your e-mail address
% unless you accept that it will be published
\institute{Purdue University \and Nanjing University
%\url{http://www.springer.com/lncs}
}
%\institute{}

%
% NB: a more complex sample for affiliations and the mapping to the
% corresponding authors can be found in the file "llncs.dem"
% (search for the string "\mainmatter" where a contribution starts).
% "llncs.dem" accompanies the document class "llncs.cls".
%

\toctitle{Lecture Notes in Computer Science}
\tocauthor{Authors' Instructions}
\maketitle

\begin{abstract}
Syntax-guided synthesis aims to find a program satisfying semantic specification as well as user-provided structural hypothesis. For syntax-guided synthesis there are two main search strategies: \emph{concrete search}, which systematically or stochastically enumerates all possible solutions, and \emph{symbolic search}, which interacts with a constraint solver to solve the synthesis problem. In this paper, we propose a concolic synthesis framework which combines the best of the two worlds. Based on a decision tree representation, our framework works by enumerating tree heights from the smallest possible one to larger ones. For each fixed height, the framework symbolically searches a solution through the counterexample-guided inductive synthesis approach. To compensate the exponential blow-up problem with the concolic synthesis framework, we identify two fragments of synthesis problems and develop purely symbolic and more efficient procedures. The two fragments are decidable as these procedures are terminating and complete. We implemented our synthesis procedures and compared with state-of-the-art synthesizers on a range of benchmarks. Experiments show that our algorithms are promising.
\end{abstract}

\section{Introduction}
\label{sec:intro}

Syntax-guided synthesis (SyGuS) is a common theme underlying many program synthesis systems. 
The insight behind SyGuS is that to synthesize a large-scale program automatically, the user needs to provide not only a semantic specification but also a syntactic specification, i.e., a grammar of candidate programs as the search space.
SyGuS has seen great success in the last decade, including the Sketch~\cite{sketch,sketchthesis,sketch:manual} synthesizer and the Flash~Fill feature of Microsoft Excel~\cite{flashfill,flashfill2}. The research community has also developed a standard interchange format for SyGuS problems and organized an annual competition, which encourages a plethora of syntax-guided synthesizers~\cite{sygus,sygus17}.

There is a dichotomy of existing synthesizers based on their underlying search strategies: \emph{concrete (enumerative) search} --- which systematically enumerates all possible implementations of the function to be synthesized and check if it satisfies the desired specification; and \emph{symbolic search} --- which solves the synthesis problem through a synthesis procedure interacting with a SAT/SMT solver. Neither strategy clearly outperforms the other. Concrete search typically begins the search from smaller-sized candidates, which usually can be checked more efficiently, and move toward larger-sized candidates. This strategy guarantees to produce the smallest program, if any satisfying program exists, and is proven efficient for a wide spectrum of syntax-guided synthesis tasks. For example, EUSolver~\cite{Alur2017} has been the winner of the general track in the past two years' SyGuS competition~\cite{sygus16,sygus17}. In contrast, symbolic search leverages ad hoc synthesis procedures for specific classes of synthesis problems, e.g., integer arithmetic or bit vectors, and is capable of synthesizing large programs more efficiently. For example, CVC4~\cite{Reynolds2015} has won the CLIA track of SyGuS competition three years in a row.

In this paper, we present a \emph{concolic} SyGuS synthesis framework that breaks the dichotomy and reconciles the concrete and symbolic search strategies. The framework represents programs as decision trees and enumerates all possible tree heights, from smaller to larger. For each fixed height $h$, the synthesis procedure symbolically searches whether there is a decision tree of height $h$ satisfying the specification. If a decision tree is found at height $h$, the procedure terminates and returns the found program; otherwise the procedure moves to height $h+1$ and continues the symbolic search. This concolic framework combines the advantages of two worlds: on the one hand, it still enumerates from simpler subtasks and guarantees to synthesize the smallest satisfying program; on the other hand, the framework can leverage efficient symbolic search algorithms at each height to accelerate the enumeration process. In this paper, we focus on \emph{Conditional Linear Integer Arithmetic} (CLIA) and elaborate how the concolic search strategy can be integrated into a Counterexample-Guided Inductive Synthesis (CEGIS) framework. We believe the idea of height/size-based enumeration and the integration with CEGIS are general enough and may lead to new synthesis algorithms for other classes of SyGuS problems.

A common problem faced by enumerative search approaches, including our concolic synthesis framework, is their scalability in program size. When the smallest solution is large-sized, the enumerative synthesizer may take drastically long time to discard all smaller-sized unsatisfying programs. To address this problem, we complement our synthesis framework with two \emph{decidable fragments}, namely \emph{Strong Single Invocation} (SSI) and \emph{Acyclic Translational} (AT) invariant synthesis. 
%In other words, their synthesis procedures are terminating and complete. 
When the input synthesis problem falls into one of these decidable fragments, our framework invokes the corresponding synthesis procedure to solve the synthesis problem purely symbolically. 

This paper makes the following contributions: a) proposes a concolic syntax-guided synthesis framework that combines concrete search and symbolic search synergistically; b) identifies two decidable fragments of synthesis problems and presents their terminating and complete synthesis procedures; c) compares the new algorithms with existing synthesizers on 166 benchmarks and demonstrates the effectiveness and efficiency of the concolic synthesis approach.

%A decidable tree logic could be very helpful for solving SyGuS problems, as the syntactic specification is usually written as a context-free grammar and the candidate implementation is commonly represented as a tree data-structure~\cite{sketchthesis,Alur2017}. 
%%a powerful and decidable tree logic promises to lead to decidable synthesis engines. 
%For example, there exists a known decidable fragment for SyGuS when the specification satisfies the \emph{single-invocation property}~\cite{Reynolds2015}, i.e., there is only a single invocation to the desired function in the specification. The decision procedure, called counterexample-guided quantifier instantiation, has been embodied in the CVC4 solver, which won the CLIA track of the SyGuS competition in the past three years~\cite{sygus16}.
%
%While the performance of this decidable fragment is compelling, the single-invocation property is satisfied by only a small portion of SyGuS problems. For example, it does not solve any invariant synthesis problem, because to synthesize an invariant $\textit{inv}$, the invariant specification always involves $\textit{inv}(\vec{x})$ and $\textit{inv}(\vec{x}')$ at the same time, where $\vec{x}'$ is updated from $\vec{x}$ after a loop iteration.

\section{Concolic Synthesis}
\label{sec:concolic}

In this section, we present our concolic synthesis framework for synthesizing conditional linear integer arithmetic functions.

\subsection{The CLIA Synthesis Problem}

We first present the class of synthesis problems we tackle in this paper: Conditional Linear Integer Arithmetic (CLIA). The CLIA synthesis problem is probably the most well studied class of synthesis problems and embodied in a lot of common synthesis tasks such as sketch-based program synthesis or loop invariant synthesis.

Overall, a CLIA synthesis problem can be represented as a formula $\exists f \forall \vecx \varphi(f; \vecx)$, where $f$ is the name of the function to be synthesized, $\vecx$ is a vector of variables, and $\varphi(f; \vecx)$ is a formula for which the grammar is shown in Figure~\ref{fig:syntax}. A solution to the synthesis problem is a term $t(\vecy)$ from the same grammar such that replacing $f$ with $t(\vecy)$ in $\varphi(f; \vecx)$ makes the formula valid in the theory of linear arithmetic. Formally, $\forall \vecx \varphi\big(\lambda \vecy.t(\vecy); \vecx\big)$ should be a valid LIA formula.

Note that the function to be synthesized can return not only integer values but also boolean values. In other words, the CLIA synthesis problem we handle covers not only the CLIA track but also the INV track of the SyGuS competition.

\begin{figure}
\begin{displaymath}
\begin{array}{rll}
\multicolumn{3}{l}{~~~~~~f:~\textrm{Function Name}  ~~~~~~~~~~~~~\vec{t}:~\textrm{Term Vector}} \\
\textrm{~Term:~} t, t_1, t_2 & ::= & 0 ~\big|~ 1 ~\big|~ x ~\big|~ t_1 + t_2 ~\big|~ \ite(c, t_1, t_2) ~\big|~ f(\vec{t}) \\
\textrm{~Condition:~} \varphi, \varphi_1, \varphi_2 & ::= & t_1 \geq t_2 ~\big|~ \varphi_1 \wedge \varphi_2 ~\big|~ \neg \varphi ~\big|~ f(\vec{t}) \\
\end{array}
\end{displaymath}
\caption{Syntax of CLIA Synthesis}\label{fig:syntax}
\end{figure}

\subsection{Decision Tree Representation}

To synthesize a function $f(\vecx)$ in CLIA, we assume the implementation of $f$ is represented in the \emph{decision tree normal form}, as described in Figure~\ref{fig:dtnf}. It is not hard to see that every CLIA term can be converted to this normal form. The proof relies on the fact that every atomic formula $t_1 \geq t_2$ can be rewritten in the form of $\vecc \cdot \vecx + d \geq 0$, where $\vecc$ is a vector of constants and $\vecx$ is a vector of variables. Then the decision tree representation of a CLIA term is a binary tree in which every node with id $i$ contains a vector $\vecc_i$ of integer constants and an extra constant $d_i$. Then each decision node (non-leaf node) tests whether $\vecc_i \cdot \vecx + d_i \geq 0$ and according to the test result proceeds to the ``true'' child or ``false'' child. Each leaf node determines the value of the function as $\vecc_i \cdot \vecx + d_i$. For example, the binary max function $$max2(x_1, x_2) \stackrel{\textit{def}}{=} \ite(x_1 \geq x_2, x_1, x_2)$$ can be represented as the tree shown in Figure~\ref{fig:max2}.

Representing invariants is similar. The only difference is that the leaf node will return true or false, depending on whether the associated expression is non-negative or not.

We assume that the decision-tree is a full binary tree -- if it is not full, one can extend the tree with extra nodes with a tautology condition, e.g., $1 \geq 0$. Notice that the full decision tree of height $h$ consists of $2^h - 1$ nodes, and the node id's can be fixed in the range between $0$ and $2^h-2$. This allows us to reduce the SyGuS synthesis problem to the problem of searching for:
\begin{itemize}
\item[a)] the height of the decision tree; 
\item[b)] the values $\vecc_i$ and $d_i$ for each node $i$. 
\end{itemize}

\begin{figure}
\begin{displaymath}
\begin{array}{rll}
\multicolumn{3}{l}{\textrm{~~~~~~Int Const Vector:~}  \vecc_i } \\
\multicolumn{3}{l}{\textrm{~~~~~~Int Const:~}  d_i } \\
\textrm{~Atom Expr:~} e & ::= & \displaystyle \vecc_i \cdot x + d_i \\
\textrm{~Atom Cond:~} \alpha & ::= & e \geq 0 \\
\textrm{~Expr:~} E, E_1, E_2 & ::= & \displaystyle e ~\big|~ \ite (\alpha, E_1, E_2) \\
\textrm{~Condition:~} \varphi & ::= & \alpha ~\big|~ \ite (\alpha, \varphi_1, \varphi_2) \\
\end{array}
\end{displaymath}
\caption{Decision Tree Normal Form}\label{fig:dtnf}
\end{figure}

\begin{figure}
\begin{center}
\unitlength=2mm
\includegraphics{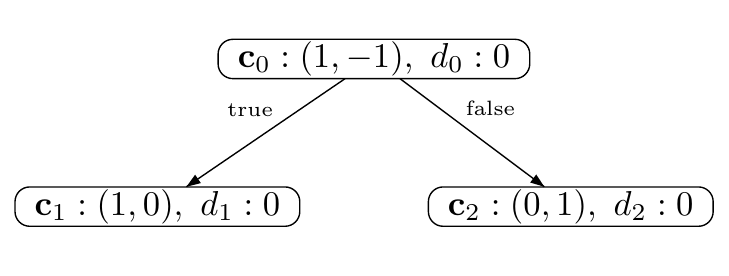}

%\vspace*{-0.4in}
\end{center}
\caption{Representation of the $max2$ function}\label{fig:max2}
\end{figure}

\subsection{Concolic Search}

\begin{algorithm}
%\vspace{-0.3in}
\SetKwInOut{Input}{input}
\SetKwInOut{Output}{output}
\SetKwProg{Fn}{def}{\string:}{}
\SetKwFunction{Conc}{concolic-synth}
\SetKwFunction{Fixed}{fixed-height-synth}
\SetKwFunction{CheckSat}{check-sat}
\SetKwFunction{Synth}{synthesize}
\SetKwFunction{Tree}{checkTreeness}
\SetKwFor{For}{for}{:}{}%
\SetKwFor{ForEach}{foreach}{}{}
\SetKwIF{If}{ElseIf}{Else}{if}{:}{elif}{else:}{}%
\SetKwFor{While}{while}{:}{fintq}%
\SetKw{Assert}{assert}
\SetKw{True}{true}
\SetKw{False}{false}
\AlgoDontDisplayBlockMarkers
\SetAlgoNoEnd
\SetAlgoNoLine%
\Input{A specification $\varphi(f; \vecx)$}
\Output{A term $t(\vecy)$ satisfying $\varphi(\lambda \vecy. t(\vecy); \vecx)$ }
$E \leftarrow \emptyset$ \\
\Fn{\Conc{$\varphi(f; \vecx)$}}{
  $h \leftarrow 1$ \\
  \Repeat{$t \neq \bot$}{
    $t \leftarrow$ \Fixed{$\varphi(f; \vecx)$, $h$} \\
    $h \leftarrow h+1$
  }
  \Return{$t$} \\
}
\Fn{\Fixed{$\varphi(f; \vecx)$, $h$}}{
  $candidate \leftarrow 0$ \\
  \Repeat{$t = \bot$}{
    $solution \leftarrow$ \CheckSat{$\neg \varphi(\lambda \vecy.candidate(\vecy); \vecx)$} \\
    \If{$solution =$ \textsf{unsat}}{\Return{$solution$}}
    \Else{$E \leftarrow E \cup \{solution\}$; $t(\vecy) \leftarrow$ \Synth{$\displaystyle \bigwedge_{\vec{e} \in E} \varphi(f; \vec{e})$}}
  }
  \Return{$t$} \\
}

\caption{Concolic synthesis algorithm}
\label{alg:concolic}
\vspace{.1in}
\end{algorithm}

Overall, our concolic search algorithm, presented as Algorithm~\ref{alg:concolic}, adopts the standard CEGIS framework~\cite{sketch}. In the setting of our CLIA synthesis, the CEGIS framework consists of two parts: a verifier and a synthesizer. The framework works by maintaining a set of counterexamples $E$, which is a set of integer vectors. In each iteration, the synthesizer proposes a candidate solution $t(\vecy)$ that satisfies the specification when $\vecx$ is assigned values from $E$, i.e., $\displaystyle \bigwedge_{\vec{e} \in E} \varphi(\lambda \vecy.t(\vecy); \vec{e})$. Then the verifier checks whether the candidate satisfies the specification: $\forall \vecx \varphi(\lambda \vecy.t(\vecy); \vecx)$. If yes, then $f$ is the desired function and the algorithm terminates; otherwise the verifier reports a new vector $\vec{e}_c$ as the witness of the failed verification, i.e., $\neg \varphi(\lambda \vecy.t(\vecy); \vec{e}_c)$, and add $\vec{e}_c$ to $E$. Then CEGIS continues with the next iteration repeatedly, until a solution is found.

In our framework, the verifier can be easily constructed. Given a candidate solution $f(\vecy) = t(\vecy)$, $f$ is the desired function if and only if the negation of the specification $\exists \vec{x}. \neg \varphi(\lambda \vecy.t(\vecy); \vecx)$ is unsatisfiable. The unsatisfiability of QF\_LIA can be checked by an SMT solver (\texttt{check-sat} in line 12).

The synthesizer part is the crux of the concolic synthesis framework. It enumerates all possible heights of the decision tree, starting from $1$ (the loop in lines 4--7). In each iteration of the CEGIS loop, the synthesizer attempts to synthesize a solution with the current height $h$ (\texttt{fixed-height-synth} in line 5). If there is no solution, the synthesizer increases the height to $h+1$ and retry.

Now in each iteration, the synthesis task is to synthesize a term $t(\vecy)$ of a fixed height $h$ such that $\displaystyle \bigwedge_{\vec{e} \in E} \varphi(\lambda \vecy.t(\vecy); \vec{e})$. 
This task can also be encoded as an SMT query (\texttt{synthesize} in line 16). Notice that in $\varphi(f; \vec{e})$, $f$ is applied to constants only and in the form of $f(\vecc)$ where $\vecc$ is an integer vector. For example, if an assignment $\vec{e} = (1,2)$ is a counterexample in $E$ and the current height $h=2$, then the expression $f(1, -2)$ can be represented as an LIA expression: $$\ite \big(\vecc_0 \cdot (1,-2) + d_0 \geq 0,~~ \vecc_1 \cdot (1,-2) + d_1,~~ \vecc_2 \cdot (1,-2) + d_2\big)$$ Then the SMT query is simply a conjunction that enumerates all $\vec{e} \in E$ and replace $f(\vec{e})$ with the corresponding encoding LIA expression.

\subsection{Parallelization}

While the naive concolic synthesis algorithm incrementally searches all possible height of the decision tree, starting from $1$, the algorithm can be naturally parallelized. If there are $n$ cores available on the machine, the parallelized concolic synthesis framework runs the same CEGIS algorithm with $n$ different heights on the $n$ cores independently, i.e., each maintains a separate set of counterexamples. The algorithm starts with the $n$ smallest heights, $\{1, \dots, n\}$. It also maintains a variable $p$ as the next height to search, starting from $n+1$. Whenever a core concludes that there is no solution at the current height, it starts a new CEGIS loop at height $p$, and the value of $p$ gets increased. The whole algorithm stops whenever a core finds a solution.

\section{Synthesis with the Strong Single Invocation Property}

As an enumerative approach, the concolic synthesis framework presented in Section~\ref{sec:concolic} works very efficiently when the smallest solution's decision tree is not high. However, the performance may slow down drastically when the smallest solution's decision tree's height increases. For example, our algorithm can synthesize the running example \textsf{max2} in $0.7$ second, then synthesize \textsf{max3} (max of three arguments) in $2.6$ seconds; but \textsf{max4} takes $876$ seconds to synthesize. The main reason is that the SMT solving time increases exponentially while the size of the decision tree (and hence the size of the formula solved in each iteration) increases linearly. Moreover, the number of counterexamples needed to determine the unsatisfiability at a given height level also blows up.

To this end, we identified two decidable fragments of CLIA synthesis, for which pure symbolic synthesis/decision procedures obviously outperform the general concolic search algorithm. In this section and the next section, we describe these two fragments, prove their decidability and present the decidable synthesis procedures. These two procedures are integrated with the concolic search algorithm as privileged alternatives: if a synthesis problem falls in one of the decidable fragments, the corresponding decision procedure will be invoked instead of the general concolic search algorithm. The user can be completely agnostic to the choice of the underlying algorithm.

\subsubsection{Strong single invocation (SSI).}

The first decidable fragment is defined based on a strong single invocation property:
\begin{definition}
A CLIA synthesis problem $\exists f \forall \vecx \varphi (f; \vecx)$ has the Strong Single Invocation (SSI) property if there is a single vector of terms $\vectt$ such that
\begin{enumerate}
\item $f$ only occurs in $\varphi$ as $f(\vectt)$;
\item in each atomic formula, $f(\vectt)$ occurs at most once.
\end{enumerate}
\end{definition}
This fragment is actually inspired by the Single Invocation (SI) property identified by Reynolds~\etal~\cite{Reynolds2015}, for which they proposed a complete but non-terminating procedure called Counterexample-Guided Quantifier Instantiation (CEGQI). SSI strengthens the SI property with the second single-occurrence property and guarantees the termination and decidability. To understand the difference, consider $\exists f \forall x (f(x) +  f(x) = 4x)$. This synthesis problem satisfies the SI property but does not satisfy the SSI property, as the second property in the definition is violated.

\begin{theorem}[Decidability of SSI]
There is a terminating synthesis procedure that takes as input a CLIA synthesis problem $\exists f \forall \vecx \varphi (f; x)$ satisfying the SSI property, either outputs a satisfying term $t(\vecy)$ as a solution exists, or concludes the nonexistence of a solution.
\end{theorem}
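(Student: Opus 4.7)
The plan is to reduce the SSI synthesis problem to a validity question in linear integer arithmetic (LIA), use LIA decidability to settle existence, and then extract the Skolem function as a CLIA term.

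First, I would use the single-invocation half of SSI to replace every occurrence of $f(\vec{t})$ by a fresh integer (or boolean, in the INV case) variable $z$, obtaining a formula $\varphi'(z;\vec{x})$ that does not mention $f$. Because any legal interpretation of $f$ must assign the same value to all $\vec{x}$ with equal $\vec{t}(\vec{x})$, the SyGuS sentence $\exists f \forall \vec{x}\,\varphi(f;\vec{x})$ is equivalent to the pure LIA sentence
\[
\Phi \;\equiv\; \forall \vec{y}\,\Bigl[\bigl(\exists \vec{x}\,\vec{t}(\vec{x})=\vec{y}\bigr) \;\Longrightarrow\; \exists z\,\forall \vec{x}\,\bigl(\vec{t}(\vec{x})=\vec{y} \Longrightarrow \varphi'(z;\vec{x})\bigr)\Bigr].
\]
Decidability of Presburger arithmetic then decides $\Phi$ in a terminating fashion; if $\Phi$ is invalid the procedure reports the nonexistence of a solution.

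If $\Phi$ is valid, I would extract the witness constructively. Quantifier elimination on the inner $\vec{x}$-block produces a quantifier-free $\psi(\vec{y},z)$ describing exactly the legal return values for each input $\vec{y}$. This is where the single-occurrence half of SSI is used: each atom of $\varphi'$ has the form $a\,z + b(\vec{x}) \bowtie 0$ with $a\in\mathbb{Z}$ syntactically fixed by the atom, so eliminating $\vec{x}$ keeps $z$ inside atoms of the same linear shape and prevents its coefficient from being obscured by cancellations across multiple occurrences. Since Presburger arithmetic admits definable Skolem functions that are piecewise linear with $\ite$-splits over residue classes modulo a bounded set of integers, I can read a CLIA term $t(\vec{y})$ out of $\psi$ such that $\psi(\vec{y},t(\vec{y}))$ holds whenever $\vec{y}\in\mathrm{Im}(\vec{t})$; for $\vec{y}$ outside the image, $t(\vec{y})$ is defined arbitrarily (e.g., $0$). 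Substituting this $t$ back verifies the original specification.

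The main technical obstacle will be the extraction step: showing that the piecewise-linear witness really fits inside the CLIA grammar of Figure~\ref{fig:dtnf}. The floor/ceiling operations implicit in bounds like ``$z \geq \lceil -b(\vec{x})/a\rceil$'' must be compiled into finite $\ite$-cascades over $\vec{x}\bmod a$, which is sound because the coefficients $a$ are bounded by a syntactic measure of $\varphi$. Together with termination of LIA quantifier elimination and decidability of $\Phi$, this yields a terminating procedure that either outputs a CLIA solution or certifies nonexistence.
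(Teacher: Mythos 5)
Your overall architecture---replace the single invocation $f(\vectt)$ by a fresh variable $z$, decide a resulting first-order LIA sentence, then extract a definable Skolem function---matches the paper's proof in its first two steps, and your sentence $\Phi$ is actually more careful than the paper about functional consistency when $\vectt$ is not injective. The gap is in the extraction step. You route it through full Presburger quantifier elimination and appeal to Presburger-definable Skolem functions, which in general are piecewise linear with floors/divisions and case splits on residue classes, and you propose to compile the resulting bounds $z \geq \lceil -b(\vecx)/a\rceil$ into ``$\ite$-cascades over $\vecx \bmod a$.'' But the target grammar (Figure~\ref{fig:syntax}, Figure~\ref{fig:dtnf}) has no modulo operator, and its conditions are boolean combinations of linear inequalities $t_1 \geq t_2$; a residue class $\{x : x \equiv r \ (\mathrm{mod}\ a)\}$ with $a \geq 2$ is not definable by any finite boolean combination of such inequalities, so the compiled cascade is not a CLIA term. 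Worse, eliminating the inner $\forall\vecx$ block by Cooper-style QE can itself introduce divisibility constraints on $z$, after which the admissible values of $z$ for a given $\vec{y}$ form an arithmetic progression and any Skolem function genuinely requires division---again outside the grammar. So as written, the procedure cannot be shown to output a term of the required form.

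What you are missing is the actual role of the second SSI condition. Because $f(\vectt)$ occurs at most once in each atomic formula, and terms in the grammar are built only from $0$, $1$, variables, $+$, and $\ite$, the variable $z$ occurs in every atom with coefficient in $\{-1,0,+1\}$; the nontrivial coefficients $a$ you worry about cannot arise. Putting $\varphi(z;\vecx)$ in DNF, every $z$-atom becomes $z \geq t_{i,j}(\vecx)$ or $z \leq u_{i,k}(\vecx)$ with $t_{i,j}$, $u_{i,k}$ plain CLIA terms, and validity of $\forall\vecx\exists z\,\varphi(z;\vecx)$ means some disjunct has all its lower bounds below all its upper bounds; the witness is then simply $\textsf{MAX}_j(t_{i,j}(\vecx))$ for the first feasible disjunct, a nested $\ite$ squarely inside the grammar. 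No quantifier elimination over $\vecx$, no floors, no modulo. That is the paper's argument; your detour through generic Presburger Skolemization is exactly where the proof breaks.
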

\begin{proof}
Due to the first SSI property, the formula $\exists f \forall \vecx \varphi (f; \vecx)$ is logically equivalent to $\forall \vecx \exists z \varphi (z; \vecx)$, where $\varphi(z; \vecx)$ replaces the all occurrences of $f(\vectt)$ in $\varphi (z; \vecx)$ with $z$.

Now notice that its negation, $\forall z \neg \varphi(z; \vecx)$, is a first-order CLIA formula, and the satisfiability can be decided. Obviously, if the formula is satisfiable, the satisfying assignment of $\vecx$ witnesses the original synthesis problem is unsynthesizable. Hence what remains is to show that if $\forall z \neg \varphi(z; \vecx)$ is unsatisfiable, i.e., $\forall \vecx \exists z \varphi(z; \vecx)$ is valid, the function $f$ can be implemented: there is a CLIA term $t(\vecx)$ such that $\varphi(t(\vecx); \vecx)$ is valid.

As $\varphi(z; \vecx)$ is quantifier-free, we can assume it is in the disjunctive normal form. Then due to the second SSI property, $\varphi(z; \vecx)$ can be written as 
\begin{equation}\label{normalized}
\displaystyle \bigvee_{i} \Big( \bigwedge_{j} \big(z \geq t_{i,j}(\vecx)\big) \wedge \bigwedge_{k} \big(z \leq u_{i,k}(\vecx)\big) \Big)
\end{equation}
Now as we assume $\forall \vecx \exists z \varphi(z; \vecx)$ is valid, there must exist an $i$ such that $t_{i,j}(\vecx)$ is not greater than $u_{i,k}(\vecx)$, for any $j,k$. Then one can simply let $z$ be the maximum value among $t_{i,j}(\vecx)$ (or the minimum value among $u_{i,k}(\vecx)$). Formally, there is a straightforward implementation of $f$:
\begin{equation}\label{straightforward-solution}
\begin{array}{lll}
\displaystyle f(\vecx) & = & \ite \Big( \big( \bigwedge_{j,k} t_{1,j}(\vecx) \leq u_{1,k}(\vecx) \big),~~ \textsf{MAX}_j(t_{1,j}(\vecx)),  \\
& & ~~~~ \ite \Big( \big( \bigwedge_{j,k} t_{2,j}(\vecx) \leq u_{2,k}(\vecx) \big),~~ \textsf{MAX}_j(t_{2,j}(\vecx)), \\
& & ~~~~~~~~ \ite \Big( \cdots ,~~ \textsf{MAX}_j(t_{3,j}(\vecx)), \\
& & ~~~~~~~~~~~~ \cdots, 0 \Big) \cdots \Big) \Big) \\
\end{array}
\end{equation}
where $\textsf{MAX}_j(t_{i,j}(\vecx))$ represents the nested $\ite$-term selecting the term $t_{i,m}(\vecx)$ with the greatest value.
\end{proof}

The straightforward implementation~\ref{straightforward-solution} requires a DNF-conversion, which results in an exponential size blowup. This is actually unnecessary and the following corollary shows a polynomial-size implementation.
\begin{corollary}
There is a synthesis procedure that constructs solutions for CLIA synthesis problems with SSI properties in polynomial time, if a solution exists.
\end{corollary}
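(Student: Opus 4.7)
The plan is to bypass the DNF conversion and instead exploit the atomic structure of $\varphi$ directly. By the SSI property, every atom of $\varphi(z;\vecx)$ either does not mention $z$ or is of the form $\alpha_\ell z + \beta_\ell(\vecx) \geq 0$ with $\alpha_\ell \in \mathbb{Z}$ and $\beta_\ell$ a CLIA term over $\vecx$ alone. First I would extract, syntactically from $\varphi$, a set of threshold terms $\tau_1(\vecx), \ldots, \tau_p(\vecx)$ of size $p = O(|\varphi|)$, one for each $z$-mentioning atom, each representing the integer boundary at which that atom flips truth value. To absorb the off-by-one effects that arise when $|\alpha_\ell| \neq 1$ or when the atom appears negated, I would enlarge this to a candidate set $\widehat T(\vecx) = \{\tau_\ell(\vecx),\; \tau_\ell(\vecx)+1 \mid 1 \leq \ell \leq p\}$ of size $2p$.

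Next I would prove a structural lemma: for every fixed $\vecx$, the truth value of each atom of $\varphi$ is constant on the maximal intervals of $\mathbb{Z}$ delimited by $\widehat T(\vecx)$, so the solution set $\{z \in \mathbb{Z} : \varphi(z;\vecx)\}$ is a union of such intervals. Consequently, whenever this set is nonempty it must intersect $\widehat T(\vecx)$. Combined with the theorem's characterization of synthesizability as validity of $\forall \vecx \exists z\, \varphi(z;\vecx)$, this shows that some element of $\widehat T(\vecx)$ witnesses $\varphi$ for every $\vecx$.

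The construction is then direct: enumerate $\widehat T(\vecx)$ as $c_1, \ldots, c_{2p}$ and set
\begin{equation*}
f(\vecx) \;=\; \ite(\varphi(c_1;\vecx),\, c_1,\, \ite(\varphi(c_2;\vecx),\, c_2,\, \ldots \ite(\varphi(c_{2p};\vecx),\, c_{2p},\, 0) \cdots)).
\end{equation*}
Each guard $\varphi(c_i;\vecx)$ is obtained by substituting the constant-in-$z$ expression $c_i$ into $\varphi$ and has size $O(|\varphi|)$, so the whole term has size $O(|\varphi|^2)$ and is built in polynomial time. Correctness reduces to the structural lemma: by construction $f(\vecx) = c_i$ for the first $i$ with $\varphi(c_i;\vecx)$, which exists precisely when a witness exists for the given $\vecx$.

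The main obstacle I anticipate is the bookkeeping of the structural lemma over the integers: when $|\alpha_\ell| \neq 1$ the true boundary is rational and must be snapped to a neighboring integer, which is exactly why $\widehat T$ must include both $\tau_\ell$ and $\tau_\ell + 1$, and care is required so that no ``missed'' integer of the solution set causes a gap in the candidate coverage. A secondary subtlety is that each $\tau_\ell$ must itself be a CLIA term in $\vecx$; here the SSI single-occurrence clause is indispensable, since it guarantees that the $z$-free residue of each atom is a single CLIA term $\beta_\ell(\vecx)$ free of any $z$-products, so the thresholds can be expressed in the target grammar without further manipulation.
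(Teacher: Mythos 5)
Your overall strategy is the same as the paper's: collect polynomially many candidate terms from the atoms of $\varphi$, return the first candidate $c_i$ whose guard $\varphi(c_i;\vecx)$ holds, and default to $0$. The paper secures coverage by first normalizing every \emph{literal} into the form $z \geq t(\vecx)$ or $z \leq t(\vecx)$ --- so negations are pushed in and the off-by-one constants are absorbed into the terms $t$ themselves --- and then appealing to the DNF solution of the theorem, under which the witness is always a maximum of lower bounds or a minimum of upper bounds, hence one of the collected terms. Your route through an interval-decomposition lemma is a legitimate alternative, but as stated it has a real gap.

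The gap is twofold. First, the inference ``the solution set is a union of intervals delimited by $\widehat T(\vecx)$, hence if nonempty it intersects $\widehat T(\vecx)$'' is a non sequitur: the unbounded interval lying strictly below the smallest element of $\widehat T(\vecx)$ is one of those delimited intervals and contains no candidate. Second, and concretely, your enlargement $\{\tau_\ell, \tau_\ell+1\}$ goes in the wrong direction for negated lower bounds. Take $\varphi(z;x) \equiv \neg(z \geq x)$, which satisfies SSI and is solved by $f(x)=x-1$: for each $x$ the solution set is $(-\infty,\, x-1]$, while $\widehat T(x)=\{x,\, x+1\}$, so both guards fail, your cascade returns $0$, and $\varphi(0;x)$ is not valid. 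The repair is cheap: take $\widehat T = \{\tau_\ell-1,\ \tau_\ell,\ \tau_\ell+1\}$ and argue endpoint by endpoint --- every maximal interval of the solution set is either all of $\mathbb{Z}$ or has a finite endpoint adjacent to a flip point of some atom, and every such endpoint then lies in the enlarged $\widehat T$. (Your concern about $|\alpha_\ell|\neq 1$ is moot: the grammar has no multiplication and the second SSI condition forbids a repeated occurrence of $f(\vec{t})$ inside an atom, so the coefficient of $z$ is always $\pm 1$.) With that correction your construction is sound and matches the paper's polynomial bound.
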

\begin{proof}
We assume $\varphi (z; \vecx)$ is normalized such that every atomic formula is in the form of $z \geq t(\vecx)$ or $z \leq t(\vecx)$ (similar to those in Equation~\ref{normalized}). This normalization can be done in polynomial time. Let the set of terms involving $\vecx$ be $T$, then according to the straightforward solution~\ref{straightforward-solution}, for any $\vecx$, the value of $f(\vecx)$ can always be a term from $T$. Hence we can build an alternative implementation by enumerating every term $t_i(\vecx) \in T$ as the value of $f(\vecx)$ and check if $\varphi(t_i(\vecx); \vecx)$ is satisfied:
\begin{equation}\label{better-solution}
\begin{array}{lll}
\displaystyle f(\vecx) & = & \ite \Big( \lambda \vecx. \varphi(t_1(\vecx); \vecx),~ t_1(\vecx),  \\
& & ~~~~ \ite \Big( \varphi(\lambda \vecx. t_2(\vecx); \vecx),~ t_2(\vecx), \\
& & ~~~~~~~~ \ite \Big( \cdots ,~ t_3(\vecx), \\
& & ~~~~~~~~~~~~ \cdots, 0 \Big) \cdots \Big) \Big) \\
\end{array}
\end{equation}
Above is clearly a solution and can be constructed in polynomial time.
\end{proof}

\subsubsection{Extension for Commutativity.} 

While the SSI property leads to an efficient and decidable terminating procedure described above, the property can be easily violated by practical synthesis problems. For example, the user may want to specify the commutativity in addition to other properties for the desired function: $$\exists f \forall x_1,x_2: f(x_1,x_2) = f(x_2,x_1) \wedge \varphi(f;~ x_1,x_2)$$ However, with the commutativity, we only need to focus on the values of $f(x_1,x_2)$ when $x_1 \geq x_2$, because otherwise we can use the value of $f(x_2,x_1)$ instead. Hence we can synthesize a function $g$ with SSI specification first: $$\exists g \forall x_1,x_2: \big(x_1 \geq x_2 \rightarrow \varphi(g;~ x_1,x_2)\big)$$ Now $g$ can be synthesized using our SSI synthesis procedure; then the desired function $f$ can be written as $f(x_1,x_2) \stackrel{\textit{def}}{=} \ite(x_1 \geq x_2, g(x_1,x_2), g(x_2,x_1))$.

\section{Acyclic Translational Invariant Synthesis}

Another decidable fragment we identified is for invariant synthesis only. The invariant synthesis problem we consider in this paper is a special kind of CLIA synthesis problem:

\begin{definition}[Invariant synthesis problem]
An invariant synthesis problem can be represented as $\exists \inv \forall \vecx \varphi(\inv; \vecx)$, where $\inv$ is the predicate to be synthesized and $\varphi(\inv; \vecx)$ is of the form $$\varphi(\inv; \vecx) \equiv \Big( \pre(\vecx) \rightarrow \inv(\vecx) \Big) \wedge \Big( \inv(\vecx) \rightarrow \inv(\trans(\vecx)) \Big) \wedge \Big( \inv(\vecx) \rightarrow \post(\vecx) \Big)$$ where $\pre(\vecx)$ and $\post(\vecx)$ are CLIA conditions over $\vecx$ defined in the Syntax of Figure~\ref{fig:syntax}, $\trans(\vecx)$ defines a vector of CLIA terms over $\vecx$ such that $|\trans(\vecx)| = |\vecx|$.
%$\pre$/$\post$ are linear arithmet constraints and $\trans$ is a conditional linear arithmetic term, all defined in Figure~\ref{fig:invariant-synthesis-problem}. A solution to $\syn$ is a constraint $\inv(\vecx)$ such that
%\begin{itemize}
%\item $\forall \vecx: \pre(\vecx) \Rightarrow \inv(\vecx)$
%\item $\forall \vecx: \inv(\vecx) \Rightarrow \inv(\trans(\vecx))$
%\item $\forall \vecx: \inv(\vecx) \Rightarrow \post(\vecx)$
%\end{itemize}
\end{definition}

\label{sec:invariant}

%\begin{figure}
%  \begin{displaymath}
%    \begin{array}{rcl}
%      \multicolumn{3}{c}{d \in \mathbb{Z} \qquad \vecc \in \mathbb{Z}^n \qquad \vecx \equiv (x_1, \dots, x_n)}
%      \\
%      \\
%      \textrm{Constraint}~ \varphi,\varphi_1,\varphi_2 & ::= & \true
%      		\mid  \vecc \cdot \vecx \geq d
%      		\mid \varphi_1 \wedge \varphi_2
%		\mid \varphi_1 \vee \varphi_2
%      \\
%      \textrm{Term}~ t,t_1,t_2 & ::= & \bot
%      		\mid \vecx + \vec{c}
%      		\mid \ite(\varphi, t_1, t_2)
%    \end{array}
%  \end{displaymath}
%  \caption{Invariant synthesis problem.}
%  \label{fig:invariant-synthesis-problem}
%\end{figure}

Intuitively, $(\pre(\vecx),\trans(\vecx),\post(\vecx))$ represents a program with a set of variables $\vecx$. $\pre(\vecx)$ and $\post(\vecx)$ are the pre- and post-conditions, respectively. $\trans(\vecx)$ represents the iterative transition: $\vecx := \trans(\vecx)$. The loop terminates when $\trans(\vecx) = \vecx$. The goal of the synthesis problem is to find a loop invariant guaranteeing the partial correctness of the program with respect to $\pre$ and $\post$.

\subsection{Translational Invariant Synthesis}

The decidable fragment we defined for invariant synthesis focuses on a special class of \emph{translational programs}. Intuitively, a program is translational if all variable updates in the loop are of the form $x_i := x_i + c_i$ where $c_i$ is an integer constant.

\begin{definition}[Translational Invariant Synthesis]
An invariant synthesis problem is translational if the corresponding $\trans(\vecx)$ term is defined in a way such that all atomic terms occurred in it are of the form $\vecx + \vecc$.
\end{definition}

We also assume the synthesis problem is normalized as below:
\begin{enumerate}
\item in $\trans(\vecx)$, for each occurrence of conditional term $\ite(\varphi, \vectt_1, \vectt_2)$, $\varphi$ is free of disjunction, and $\vectt_1$ is an atomic term of form $\vecx + \vecc$; and for each two occurrences $\ite(\varphi, \vectt_1, \vectt_2)$ and $\ite(\psi, \vectt_3, \vectt_4)$, $\varphi$ and $\psi$ are disjoint;
\item $\pre(\vecx)$ is in Disjunctive Normal Form (DNF);
\item $\post(\vecx)$ is in Conjunctive Normal Form (CNF).
\end{enumerate}
Obviously, any invariant synthesis problem can be converted to above normal form. 
For each conditional term $\ite(\psi, \vecx + \vecc, \vectt_2)$ occurred in the normalized $\trans(\vecx)$, we call $(\psi, \vecc)$ a \emph{branch}.

\begin{lemma}
For any translational CLIA term $\trans(\vecx)$ and any CLIA condition $\varphi(\vecx)$, there exists another CLIA term, denoted as $\trans(\varphi)$, such that $\forall \vecx \Big( \varphi(\vecx) \leftrightarrow \trans(\varphi)(\trans(\vecx)) \big)$.
\end{lemma}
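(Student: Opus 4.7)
My plan is to construct $\trans(\varphi)$ by structural induction on the $\ite$-skeleton of the normalized translational term $\trans(\vecx)$, with an inner induction on $\varphi$ in the base case. The key observation I will exploit is that each leaf of $\trans(\vecx)$ is a pure translation $\vecx + \vecc$, which is a bijection whose inverse is the syntactic shift $\vecy \mapsto \vecy - \vecc$; pulling an atomic condition $\vec{a}\cdot\vecx + b \geq 0$ back along such a leaf is therefore just $\vec{a}\cdot\vecy + (b - \vec{a}\cdot\vecc) \geq 0$, and the boolean connectives propagate through the substitution unchanged.

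For the base case $\trans(\vecx) = \vecx + \vecc$, I take $\trans(\varphi)(\vecy)$ to be the term obtained by replacing every $\vecx$ in $\varphi$ by $\vecy - \vecc$; a routine calculation then gives $\trans(\varphi)(\vecx + \vecc) \leftrightarrow \varphi(\vecx)$. For the inductive step, $\trans(\vecx) = \ite(\psi, \vecx + \vecc, \vectt_2(\vecx))$, where $\vectt_2$ is a strictly smaller translational term whose $\trans_2(\varphi)$ is supplied by the induction hypothesis; I then set
\[
\trans(\varphi)(\vecy) := \ite\bigl(\psi(\vecy - \vecc),\; \varphi(\vecy - \vecc),\; \trans_2(\varphi)(\vecy)\bigr).
\]
Verifying the required equivalence amounts to a case split on which branch $\vecx$ activates: if $\psi(\vecx)$ holds then $\trans(\vecx) = \vecx + \vecc$ and the outer guard at $\vecy = \vecx + \vecc$ reduces to $\psi(\vecx)$, selecting the then-branch $\varphi(\vecx)$; if $\neg\psi(\vecx)$ then $\trans(\vecx) = \vectt_2(\vecx)$ and the induction hypothesis yields $\varphi(\vecx)$ from the else-branch.

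The delicate step will be justifying the second sub-case: I must argue that the guard $\psi(\vectt_2(\vecx) - \vecc)$ evaluates to false whenever $\neg\psi(\vecx)$, so that the else-branch is really the one selected. This is where I expect the normalization conditions to become crucial---specifically, the pairwise disjointness of $\ite$-guards in $\trans$ together with the requirement that every then-branch is an atomic translation. My plan is to establish a separate branch-identification lemma by an inner induction on $\vectt_2$, peeling off its outermost $\ite$ and using disjointness to rule out that unfolding $\vectt_2$ along any of its branches produces a state whose $\vecc$-shift re-enters $\psi$. Once this branch-discrimination fact is in place, the remaining obligations for compound $\varphi$ reduce to straightforward propagation of the substitution through $\wedge$ and $\neg$.
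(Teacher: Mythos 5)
Your construction founders exactly at the step you flag as delicate, and the normalization conditions will not rescue it. Pairwise disjointness of the $\ite$-guards is a statement about \emph{pre}-states: $\psi(\vecx) \wedge \psi'(\vecx)$ is unsatisfiable. It says nothing about whether the post-state produced by \emph{another} branch lands in the image of the branch $(\psi,\vecc)$, which is what your outer guard $\psi(\vectt_2(\vecx)-\vecc)$ tests. Concretely, take a one-variable $\trans$ with the two branches $(x\ge 0,\ 1)$ and $(x\le -1,\ 2)$, i.e.\ $\trans(x)=x+1$ when $x\ge 0$ and $\trans(x)=x+2$ when $x\le -1$; the guards are disjoint and every branch is a translation, so this is normalized. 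At $x=-1$ we have $\neg\psi(x)$, yet $\vectt_2(x)=1$ and $\psi(\vectt_2(x)-1)=\psi(0)$ is true, so your nested $\ite$ selects the then-branch and returns $\varphi(0)$ instead of $\varphi(-1)$. The branch-identification lemma you intend to prove by an inner induction is therefore false, and the second sub-case of your main induction does not close.

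This is not a repairable defect of the induction scheme: the same example shows $\trans$ need not be injective ($1$ is the image of both $0$ and $-1$), so for $\varphi \equiv (x\ge 0)$ no term in the post-state alone can satisfy the stated biconditional at both pre-states. The paper's own proof does not attempt a pointwise biconditional at all; it simply sets $\trans(\varphi)(\vecx) \equiv \bigvee_{(\psi,\vecc)\in\branches(\trans)} \varphi(\vecx-\vecc)$, a flat disjunction over branches that computes (an over-approximation of) the forward image of $\varphi$ under $\trans$, which is the only direction used downstream in the acceleration argument. If you want to salvage your route, you should weaken the goal to this image/strongest-postcondition reading and add the guard $\psi(\vecy-\vecc)$ to each disjunct rather than trying to prove that the branches' images never overlap; at that point your nested $\ite$ collapses to essentially the paper's disjunction, and the structural induction buys you nothing beyond it.
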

\begin{proof}
Let $\trans$ be normalized and $\branches(\trans)$ be the set of branches within $\trans$. Then one can construct the formula as follows: $$\trans(\varphi)(\vecx) \equiv \displaystyle \bigvee_{(\psi, \vecc) \in \branches(\trans)} \varphi(\vecx - \vecc)$$
The formula enumerates all branches and each $\varphi(\vecx - \vecc)$ characterizes the set of values that satisfy the branch guard before transition. Notice that $\varphi(\vecx - \vecc)$ can be converted to a standard linear arithmetic formula: if $\varphi$ is $\vecc_0 \cdot \vecx \geq d$, then $\varphi(\vecx - \vecc)$ is equivalent to $\vecc_0 \cdot \vecx \geq d + \vecc_0 \cdot \vecc$; if $\varphi$ is a boolean combination of atomic formulas, then $\varphi(\vecx - \vecc)$ can be converted recursively.
\end{proof}

%\begin{definition}[Region]
%For each branch $\ite(\psi, \vecx + \vecc, t_2)$ and each constant vector $\vec{d} < \vecc$, we define a region $R_{\psi, \vec{d}, \vecc}$ as $\psi \wedge (\vecx \mod \vecc = \vec{d})$.
%\end{definition}

We further refine $\trans$ to two kinds of transitions: local transitions and non-local transitions.

\begin{lemma}
For each normalized translational term $\trans(\vecx)$, one can define two other translational terms, denoted as $\trans_l$ and $\trans_c$, such that for any $\vecx$ and $\vecy$, $\trans_l(\vecx) = \vecy$ if and only if $\trans(\vecx) = \vecy$ and $\vecx$ and $\vecy$ belong to the same branch; $\trans_c(\vecx) = \vecy$ if and only if $\trans(\vecx) = \vecy$ and $\vecx$ and $\vecy$ belong to two different branches.
\end{lemma}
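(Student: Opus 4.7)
The plan is to construct $\trans_l$ and $\trans_c$ by refining each branch of the normalized $\trans$ with an extra guard that distinguishes local from crossing transitions. Write $\branches(\trans) = \{(\psi_i, \vecc_i)\}_{i=1}^n$. By the normalization assumption the guards $\psi_i$ are pairwise disjoint, so every $\vecx$ belongs to at most one branch, and if $\psi_i(\vecx)$ holds then $\trans(\vecx) = \vecx + \vecc_i$.

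First I would observe that, for such an $\vecx$ with successor $\vecy = \vecx + \vecc_i$, the transition is local exactly when $\vecy$ still satisfies $\psi_i$, i.e.\ when $\psi_i(\vecx + \vecc_i)$ holds; by disjointness of the $\psi_j$'s, it is crossing exactly when $\neg \psi_i(\vecx + \vecc_i)$. The formula $\psi_i(\vecx + \vecc_i)$ is itself a CLIA condition obtained by syntactically substituting $\vecx \mapsto \vecx + \vecc_i$ inside $\psi_i$ --- this is the single-branch specialization of the preceding lemma. Then I would define
\[
\trans_l(\vecx) \;\equiv\; \ite\bigl(\psi_1(\vecx) \wedge \psi_1(\vecx + \vecc_1),\; \vecx + \vecc_1,\; \ite\bigl( \cdots ,\; \ite\bigl(\psi_n(\vecx) \wedge \psi_n(\vecx + \vecc_n),\; \vecx + \vecc_n,\; d\bigr) \cdots \bigr)\bigr),
\]
and $\trans_c$ analogously, with each refined guard replaced by $\psi_i(\vecx) \wedge \neg \psi_i(\vecx + \vecc_i)$. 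Both are translational terms, since the only atomic terms appearing are of the form $\vecx + \vecc_i$ and $d$. Correctness of the iff then reduces to a case split on which branch $\vecx$ lies in and whether its successor remains in that branch: in the matching case a unique refined guard fires and the term produces $\vecx + \vecc_i = \trans(\vecx)$; in the non-matching case no refined guard fires.

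The main obstacle is choosing the default term $d$ (reached when no refined guard fires) so that the iff still holds on those $\vecx$: there one needs $\trans_l(\vecx) \neq \trans(\vecx)$ (and symmetrically for $\trans_c$), so $d$ must not coincide with the transition that $\trans$ actually takes. I would resolve this either by treating $\trans_l$ and $\trans_c$ as translational terms restricted to their natural domains --- which together partition the state space exactly as $\trans$ does --- or by routing the default through a sentinel translation syntactically distinct from every $\vecc_i$, ruling out accidental equalities. Either resolution keeps the resulting term in translational form and makes the stated equivalence hold, which will suffice for the uses of $\trans_l$ and $\trans_c$ in the subsequent decidability argument.
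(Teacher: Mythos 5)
Your construction is essentially the paper's: the paper likewise defines $\trans_l$ by replacing each $\ite(\psi(\vecx), \vecx + \vecc, \vectt)$ with $\ite(\psi(\vecx) \wedge \psi(\vecx + \vecc), \vecx + \vecc, \vectt)$, and $\trans_c$ with the negated second conjunct. Your additional worry about the default/else term is a corner case the paper's proof silently ignores, and your proposed resolutions (partial domain or sentinel) are reasonable ways to patch it, so the proposal is correct and, if anything, slightly more careful than the original.
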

\begin{proof}
The conditional term for $\trans_l(\vecx)$ can be defined by replacing every $\ite(\psi(\vecx), \vecx + \vecc, \vectt)$ in $\trans(\vecx)$ with $\ite(\psi(\vecx) \wedge \psi(\vecx + \vecc), \vecx + \vecc, \vectt)$. Similarly, $\trans_c(\vecx)$ can be defined by replacing every $\ite(\psi(\vecx), \vecx + \vecc, \vectt)$ in $\trans(\vecx)$ with $\ite(\psi(\vecx) \wedge \neg \psi(\vecx + \vecc), \vecx + \vecc, \vectt)$.
\end{proof}

We write $\trans^*$ to denote the transitive closure of $\trans$. Intuitively, $\trans^*$ represents arbitrary sequences of transitions of $\trans$, i.e., $\trans_l$ or $\trans_c$. Then the sequence of transitions can be divided into multiple segments using cross-branch transitions such that each segment consists of same-branch transitions only. This intuition can be formally captured by the following lemma:
\begin{lemma}
For any transition $\trans$ and any constraint $\varphi$, $\trans^*(\varphi)$ if and only if $\trans_l^* \circ (\trans_c \circ \trans_l^*)^*(\varphi)$.
\end{lemma}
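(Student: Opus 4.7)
The plan is to reduce this to a classical relation-algebra identity. Viewing $\trans, \trans_l, \trans_c$ as binary relations on $\mathbb{Z}^{|\vecx|}$, and $\trans^*(\varphi)$ as the image under the reflexive-transitive closure, I would argue that the equality of formulas follows from the equality of relations $\trans^* = \trans_l^* \circ (\trans_c \circ \trans_l^*)^*$, applied pointwise to the set defined by $\varphi$.

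First I would establish the key decomposition of a single step: $\trans = \trans_l \cup \trans_c$. Given any pair $(\vecx,\vecy)$ with $\trans(\vecx) = \vecy$, the normalization assumption guarantees that $\vecx$ activates a unique branch $(\psi,\vecc)$ (the branch guards are pairwise disjoint), and $\vecy$ likewise activates at most one branch. So either $\vecx$ and $\vecy$ lie in the same branch, in which case the construction of $\trans_l$ in the preceding lemma gives $\trans_l(\vecx) = \vecy$ (the strengthened guard $\psi(\vecx) \wedge \psi(\vecx+\vecc)$ is satisfied), or they lie in different branches and symmetrically $\trans_c(\vecx) = \vecy$. The reverse inclusions $\trans_l \subseteq \trans$ and $\trans_c \subseteq \trans$ are immediate from their definitions.

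Next I would invoke the regular-language/Kleene-algebra identity
\[
(A \cup B)^* \;=\; A^* \circ (B \circ A^*)^*,
\]
valid for any two binary relations $A, B$, with $A := \trans_l$ and $B := \trans_c$. This yields $\trans^* = (\trans_l \cup \trans_c)^* = \trans_l^* \circ (\trans_c \circ \trans_l^*)^*$. The intuition matches the narrative already given in the paper: any run of $\trans$ can be segmented at its cross-branch steps, with each segment consisting of same-branch (local) steps. Applying both sides to the set $\{\vecx : \varphi(\vecx)\}$ yields the equality $\trans^*(\varphi) \Leftrightarrow \trans_l^* \circ (\trans_c \circ \trans_l^*)^*(\varphi)$ as required.

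I do not expect a real obstacle here. The only subtlety is justifying $\trans = \trans_l \cup \trans_c$ rigorously, which rests on the normalization assumption that the branch guards of $\trans$ are pairwise disjoint; without this, a single concrete transition could appear in both $\trans_l$ and $\trans_c$ simultaneously and the partition argument would fail. Once this is checked, the rest reduces to citing the $(A\cup B)^* = A^*(BA^*)^*$ identity, whose proof by mutual inclusion (one direction by induction on length of a $\trans^*$-run inserting empty $\trans_l^*$-segments around every $\trans_c$ step; the other direction by the fact that both $\trans_l$ and $\trans_c$ are sub-relations of $\trans$) is routine.
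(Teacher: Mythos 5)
Your proof is correct and follows exactly the segmentation idea the paper sketches in the sentence preceding the lemma; the paper itself supplies no proof of this statement, so your argument --- the pointwise decomposition $\trans = \trans_l \cup \trans_c$ justified by the pairwise-disjoint branch guards, followed by the Kleene identity $(A \cup B)^* = A^* \circ (B \circ A^*)^*$ --- is a faithful formalization of the intended reasoning. The one subtlety you flag (that disjointness of the guards, together with the complementary conditions $\psi(\vecx+\vecc)$ versus $\neg\psi(\vecx+\vecc)$ used to define $\trans_l$ and $\trans_c$, is what makes the local/cross split a genuine partition of $\trans$) is real and correctly handled.
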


\subsection{The decidable fragment}

If we start from the precondition $\pre$ and compute the transitive closure, $\trans^*(\pre)$ is just the strongest loop invariant, if any invariant exists ($\trans^*(\pre) \rightarrow \post$). However in general, $\trans^*(\pre)$ is not computable as it may take arbitrarily many steps of transitions. In this paper, we show that the local transition $\trans_l^*$ is \emph{computable} and our decidable fragment just restricts $\trans$ such that the number of $\trans_c$ transitions is \emph{finite}.

We first show $\trans_l^*$ can be captured by an extended CLIA condition.

\begin{definition}
An extended CLIA condition is a boolean combination of regular atomic conditions $t_1 \geq t_2$ and modulo assertions of the form $x \mod c = d$. 
\end{definition}

\begin{lemma}\label{fast}
For any translational term $\trans(\vecx)$ and any CLIA condition $\varphi$, one can compute an extended CLIA condition, denoted as $\fast(\varphi)$, such that $\fast(\varphi) \equiv \trans_l^*(\varphi)$.
\end{lemma}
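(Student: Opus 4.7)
The plan is to reduce computing $\trans_l^*(\varphi)$ to a standard Presburger quantifier-elimination step after decomposing the local transition branch-by-branch. First, I would use the normalization of $\trans$: the conditional terms give pairwise-disjoint branches $(\psi_j, \vecc_j)$, where each $\psi_j$ is a conjunction of linear (in)equalities. A local step $\trans_l$ sends $\vecx$ to $\vecx + \vecc_j$ exactly when $\psi_j(\vecx) \wedge \psi_j(\vecx + \vecc_j)$ holds; because the branch guards are pairwise disjoint, any chain of local steps starting from $\vecx$ must stay inside the same branch $j$, producing the arithmetic progression $\vecx, \vecx + \vecc_j, \vecx + 2\vecc_j, \ldots$

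Next, I would write $\trans_l^*(\varphi)(\vecy)$ explicitly: it is equivalent to $\varphi(\vecy)$ (the zero-step case) disjoined with, for each branch $j$, the existence of some $k \ge 1$ and $\vecx = \vecy - k\vecc_j$ such that $\varphi(\vecx)$ holds and $\psi_j(\vecx + i\vecc_j)$ holds for every $0 \le i \le k$. A convexity argument on the intermediate constraint removes the universal quantifier: each linear atom $L(\vecz) \ge 0$ of $\psi_j$ satisfies $L(\vecx + i\vecc_j) = L(\vecx) + i L(\vecc_j)$, which is affine in $i$ and hence attains its minimum on $[0,k]$ at an endpoint, so $\forall 0 \le i \le k.\,\psi_j(\vecx + i\vecc_j)$ collapses to $\psi_j(\vecx) \wedge \psi_j(\vecx + k\vecc_j) = \psi_j(\vecy - k\vecc_j) \wedge \psi_j(\vecy)$. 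This yields
\[
\trans_l^*(\varphi)(\vecy) \equiv \varphi(\vecy) \;\vee\; \bigvee_j \Big( \psi_j(\vecy) \wedge \exists k \ge 1. \, \varphi(\vecy - k\vecc_j) \wedge \psi_j(\vecy - k\vecc_j) \Big).
\]

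Finally, I would eliminate the quantifier $\exists k \ge 1$. After converting the body to DNF in the $k$-dependent atoms, each linear atom becomes a linear constraint in $k$ with coefficients built from $\vecy$, and Presburger quantifier elimination produces an equivalent Boolean combination of linear (in)equalities and divisibility assertions on linear terms over $\vecy$. Re-disjoining over the DNF disjuncts and over the branch index $j$, and adjoining the zero-step disjunct $\varphi(\vecy)$, gives the desired extended CLIA condition $\fast(\varphi)$. I expect the main obstacle to be verifying that this quantifier elimination stays within the paper's extended CLIA vocabulary: the divisibility predicates Presburger QE introduces (e.g.\ $\vecc_j$-components dividing a residual in $\vecy$) need to be encoded as the permitted $x \bmod c = d$ assertions, which amounts to introducing named linear combinations as auxiliary variables. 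A secondary concern is the DNF-induced exponential blowup, which affects efficiency but not the computability asserted here.
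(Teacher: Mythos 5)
Your proposal is correct and follows essentially the same route as the paper's proof: decompose $\trans_l^*$ branch-by-branch using disjointness of the guards, collapse the intermediate constraint $\bigwedge_{0 \le i \le k}\psi_j(\vecx + i\vecc_j)$ to its two endpoints via affineness/monotonicity in $i$, and then eliminate the existential over $k$ by Presburger quantifier elimination, yielding linear atoms plus modulo assertions. Your closing concern about whether the divisibility predicates produced by quantifier elimination fit the paper's restricted $x \bmod c = d$ form is a fair one that the paper itself glosses over, but it does not change the substance of the argument.
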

\begin{proof}
Notice that $\trans_l^*$ preserves the branch, i.e., for any $\vecx$, $\vecx$ and $\trans_l^*(\vecx)$ belong to the same branch. Hence $$\trans_l^*(\varphi) \equiv \displaystyle \bigvee_{(\psi, \vecc) \in \branches(\trans)} \trans_l^*(\varphi \wedge \psi)$$ Now it suffices to construct a formula equivalent to $\trans_l^*\big(\varphi \wedge \psi \big)$ for arbitrary branch $(\psi, \vecc)$.
Remember $(\psi, \vecc)$ corresponds to a transition $\ite(\psi(\vecx), \vecx + \vecc, t)$, and by the definition of $\trans_l$, the above transitive closure is equivalent to $$\displaystyle \bigvee_{k \geq 0} \bigg( \varphi(\vecx - k \cdot \vecc) \wedge \Big( \bigwedge_{0 \leq i \leq k} \psi(\vecx - i \cdot \vecc) \Big) \bigg)$$ As $\trans$ is normalized and every condition $\psi$ is a conjunction of atomic formulas, therefore the subformula $\displaystyle \bigwedge_{0 \leq i \leq k} \psi(\vecx - i \cdot \vecc)$ is equivalent to the conjunction of $\displaystyle \bigwedge_{0 \leq i \leq k} \vecc_j \cdot (\vecx - i \cdot \vecc) \geq d_j$ for all atomic formula of form $\vecc_j \cdot \vecx \geq d_j$ in $\psi$. As $\vecc_j \cdot (\vecx - i \cdot \vecc)$ monotonically increases or decreases along with $i$, the formula can be evaluated with the case $i=0$ or $i=k$ only: $\vecc_j \cdot (\vecx - k \cdot \vecc) \geq d_j \wedge \vecc_j \cdot \vecx \geq d_j$. Note that the first conjunct is a linear constraint over $\vecx - k \cdot \vecc$ and the second conjunct does not involve $k$; so what remains is to construct a formula equivalent to $$\displaystyle \bigvee_{k \geq 0} \varphi(\vecx - k \cdot \vecc) \wedge \psi(\vecx - k \cdot \vecc)$$ for arbitrary $\varphi$, $k$ and $\vecc$.

Notice that the above formula is semantically equivalent to a quantified formula in LIA: $$\exists k \Big( k \geq 0 \wedge \varphi(\vecx - k \cdot \vecc) \wedge \psi(\vecx - k \cdot \vecc) \Big)$$ which by quantifier elimination, can be further reduced to an equivalent, quantifier-free formula $\gamma(\vecx)$. Notice that the $\gamma(\vecx)$ is an extended CLIA condition but not necessarily a regular CLIA condition: it may contain modulo assertions of the form $x \mod C = d$, where $x \in \vecx$ and $C$ can be divided by any constant from $\vecc$.

\end{proof}

Now to define the decidable fragment, we need to capture the intuition ``the number of $\trans_c$ is finite'' by defining the transition graph and its acyclicity.

\begin{definition}
Given a normalized translational term $\trans(\vecx)$, the transition graph with respect to the transition is a tuple $(V, E)$ where
\begin{itemize}
\item $V = \{ v_{\psi, \vecc} \mid (\psi, \vecc) \in \branches(\trans(\vecx)) \}$ is the set of vertices.
\item $E = \{ (v_{\psi, \vecc}, v_{\psi', \vecc'}) \mid \psi(\vecx) \wedge \psi'(\vecx+\vecc) \textrm{ is satisfiable} \}$ is the set of directed edges.
\end{itemize}
\end{definition}
\begin{definition}
An invariant synthesis problem $\exists \inv \forall \vecx \varphi(\inv; \vecx)$ is Acyclic Translational (AT) if: a) it is translational; and b) its transition graph is acyclic.
\end{definition}

\begin{theorem}[Decidability of AT]
There is a terminating synthesis procedure that takes as input an AT invariant synthesis problem, and either outputs a loop invariant or concludes the nonexistence of a CLIA invariant.
\end{theorem}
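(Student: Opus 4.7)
My plan is to exploit the acyclicity of the transition graph to compute the strongest loop invariant $\trans^*(\pre)$ explicitly as an extended CLIA formula, and then decide by Presburger arithmetic whether this formula entails $\post$.

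The central step uses the decomposition $\trans^* \equiv \trans_l^* \circ (\trans_c \circ \trans_l^*)^*$ from the preceding lemma. Each application of $\trans_c$ advances execution along a directed edge of the transition graph to a strictly different branch vertex; since the graph is acyclic, any execution sequence visits at most $|V|$ distinct branches and therefore performs at most $|V|-1$ cross-branch transitions. Hence
\[
\trans^*(\pre) \equiv \bigvee_{k=0}^{|V|-1} (\trans_l^* \circ \trans_c)^k \circ \trans_l^*(\pre).
\]
I construct each disjunct symbolically: start with $\varphi_0 \equiv \fast(\pre)$ via Lemma~\ref{fast}, and alternately apply $\trans_c$ (a guarded translation computable by substituting $\vecx - \vecc$ under a branch guard, exactly as in the forward-image lemma earlier in the section) and $\fast$ to obtain $\varphi_1, \dots, \varphi_{|V|-1}$. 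The candidate invariant is $\inv(\vecx) \equiv \bigvee_{k} \varphi_k(\vecx)$. Finally I check validity of $\inv \rightarrow \post$ with a Presburger decision procedure: if valid, output $\inv$; otherwise, since $\inv$ is semantically the strongest invariant (it equals the reachable set), no invariant can satisfy $\post$, so I can soundly conclude nonexistence of any CLIA invariant.

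The main obstacle will be that Lemma~\ref{fast} is stated only for plain CLIA inputs, whereas the iteration immediately feeds back extended CLIA formulas containing modulo assertions. I expect this to be a mild patch rather than a genuine difficulty: the lemma's proof reduces to quantifier elimination on $\exists k \geq 0.\, \varphi(\vecx - k \cdot \vecc) \wedge \psi(\vecx - k \cdot \vecc)$, and Presburger arithmetic with modulo predicates is still closed under QE, so the same construction goes through; moreover the monotonicity argument used to collapse the cumulative branch guard depends only on $\psi$ being a plain CLIA conjunction, which the normalization assumption on $\trans$ guarantees (the iterated modulo only enters through $\varphi$ and is absorbed by QE). A secondary bookkeeping subtlety is justifying that $|V|-1$ truly bounds the number of $\trans_c$ applications along any trace; this follows because the branch of a state is a function of $\vecx$, $\trans_l^*$ preserves the branch, and so each $\trans_c$ must reach a branch not yet visited on that trace.
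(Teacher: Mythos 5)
Your decomposition $\trans^* \equiv \trans_l^* \circ (\trans_c \circ \trans_l^*)^*$, the acyclicity bound on the number of cross-branch transitions, the construction of the candidate by alternately applying $\fast$ and $\trans_c$ starting from $\pre$, and the final entailment check against $\post$ (with nonexistence concluded because the candidate is the strongest invariant, i.e.\ the reachable set) all match the paper's proof. You are also right --- and more explicit than the paper --- in flagging that iterating $\fast$ requires the quantifier-elimination step to accept extended CLIA inputs containing modulo assertions; the paper silently assumes this closure.

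The one genuine gap is at the very end. Your output $\inv(\vecx) \equiv \bigvee_k \varphi_k(\vecx)$ is an \emph{extended} CLIA condition: it may contain modulo assertions $x \bmod C = d$, which are not expressible in the CLIA grammar of the synthesis problem (only atoms $t_1 \geq t_2$ closed under conjunction and negation). So in the positive case you have not actually produced a CLIA invariant, which is what the theorem promises. The paper closes this by replacing every modulo assertion in its $\textit{strong-inv}$ with $\top$ and arguing that the resulting weaker, plain CLIA condition is still an invariant --- a step that needs its own justification, since weakening trivially preserves $\pre \rightarrow \inv$ but does not automatically preserve inductiveness or $\inv \rightarrow \post$. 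You should either add this weakening step together with an argument for why it preserves all three invariant conditions, or otherwise explain how the modulo assertions can be eliminated from the final formula. (Your nonexistence branch is unaffected: if the reachable set violates $\post$, no invariant of any form exists.)
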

\begin{proof}
As mentioned above, the strongest loop invariant can be computed as $\trans^*(\pre) \equiv \displaystyle \bigvee_{k\geq0} \trans_l^* \circ (\trans_c \circ \trans_l^*)^k(\pre)$. By Lemma~\ref{fast}, $\trans_l^*$ can be expressed as $\fast$; and by the acyclic assumption, the number of $\trans_c$ is bounded by the diameter (length of the longest path) of the transition graph, say $n$. Then the procedure can simply generate the following invariant and check its validity: $$\textit{strong-inv}(\vecx) \equiv \displaystyle \bigvee_{k=1}^{n} \fast \circ (\trans_c \circ \fast)^k(\pre)$$

Notice that $\textit{strong-inv}(\vecx)$ is not necessarily a CLIA invariant. In the case that $\textit{strong-inv}$ involves modulo assertions, we simply replace them with $\top$ and obtain a weaker condition $\textit{sub-strong-inv}(\vecx)$. We argue that if $\textit{strong-inv}(\vecx)$ is an invariant, so is $\textit{sub-strong-inv}(\vecx)$. Intuitively, the evaluation of a modulo assertion $x \mod C$ does not affect the evaluation of any non-modulo assertion in $\textit{strong-inv}(\vecx)$.
%\begin{enumerate}
%\item if $\pre(\vecx) \rightarrow \textit{strong-inv}(\vecx)$, so is the weaker condition $\pre(\vecx) \rightarrow \textit{sub-strong-inv}(\vecx)$;
%\item $\inv(\vecx) \rightarrow \inv(\trans(\vecx))$
%\item $ \inv(\vecx) \rightarrow \post(\vecx)$
%\end{enumerate}
\end{proof}

\subsubsection{Extension for Constant Varaibles}

The decidability proof for AT invariant synthesis problems can be naturally extended to handle constant variables. Formally, if the only occurrence of a variable $x_j$ in $\trans(\vecx)$ is just $x_j$, then $x_j$ is called a constant variable and can be treated as a constant. In other words, the translational transition $\trans(\vecx)$ may allow $x_i + x_j$ for arbitrary $x_i \in \vecx$. This class of invariant synthesis problem can be decided using the same algorithm set forth above.

\section{Experimental Evaluation}

We have prototyped our concolic synthesis framework as \name, which has embodied all three synthesis algorithms presented in this paper, namely concolic search, SSI synthesis and AT invariant synthesis. The concolic search algorithm presented in Section~\ref{sec:concolic} is the default synthesis engine that supports arbitrary CLIA synthesis problems. When the input synthesis problem falls into the SSI fragment or AT fragment, the corresponding decidable synthesis procedure supersedes concolic search (unless the user turns this option off).

\begin{figure}[htp!]
\begin{center}
\includegraphics[width=.8\columnwidth]{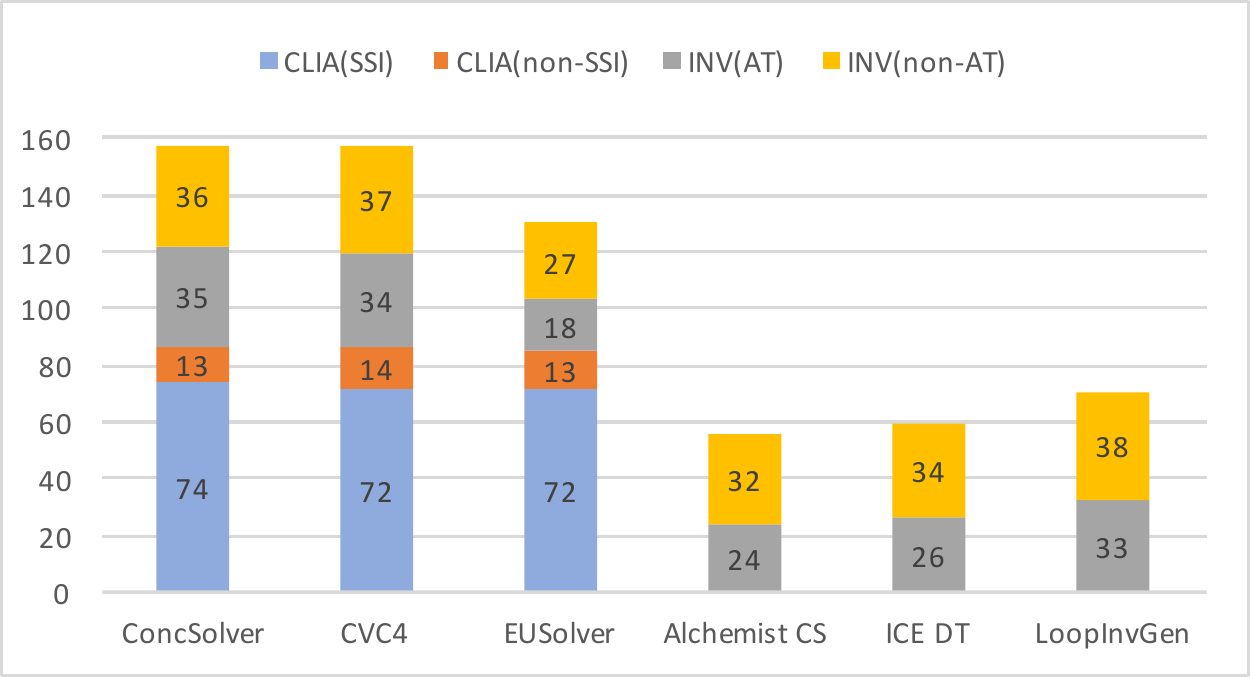}
\end{center}
\caption{Solved benchmarks (breakdown by categories)}
\label{fig:total}
\end{figure}

\begin{figure}[htp!]
\begin{center}
\includegraphics[width=.8\columnwidth]{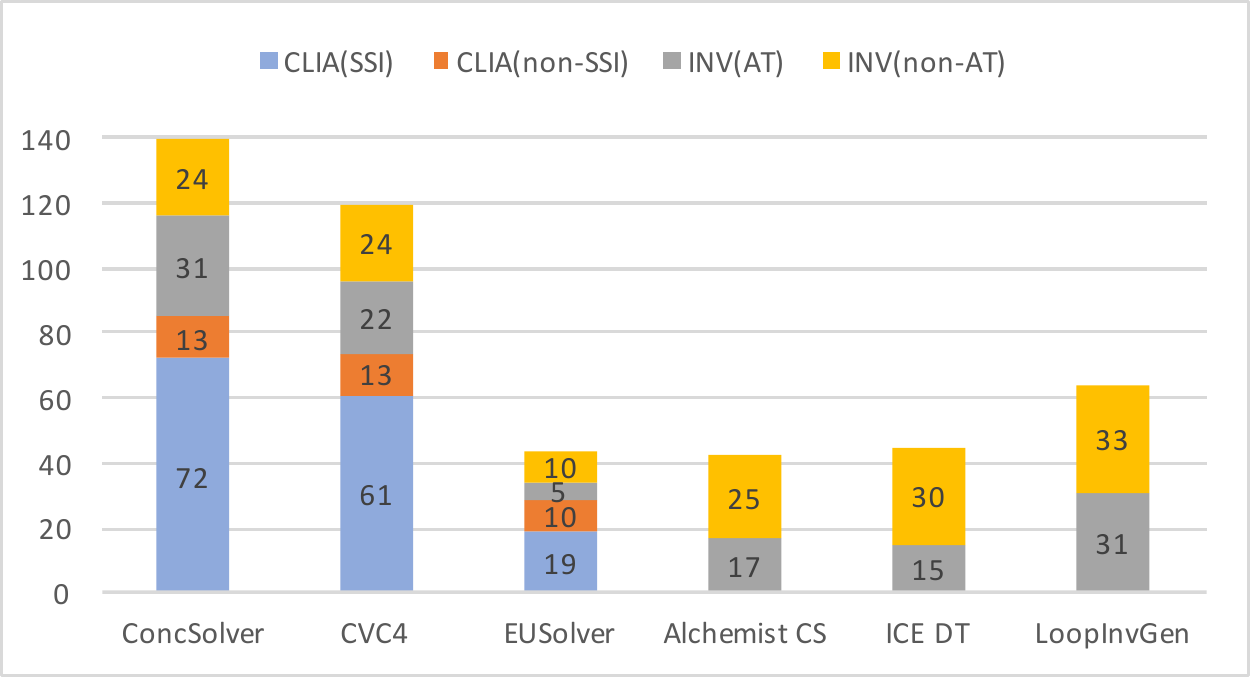}
\end{center}
\caption{Fastest solved benchmarks (breakdown by categories)}
\label{fig:fastest}
\end{figure}

%\begin{figure}[htp!]
%\begin{center}
%\includegraphics[width=.8\columnwidth]{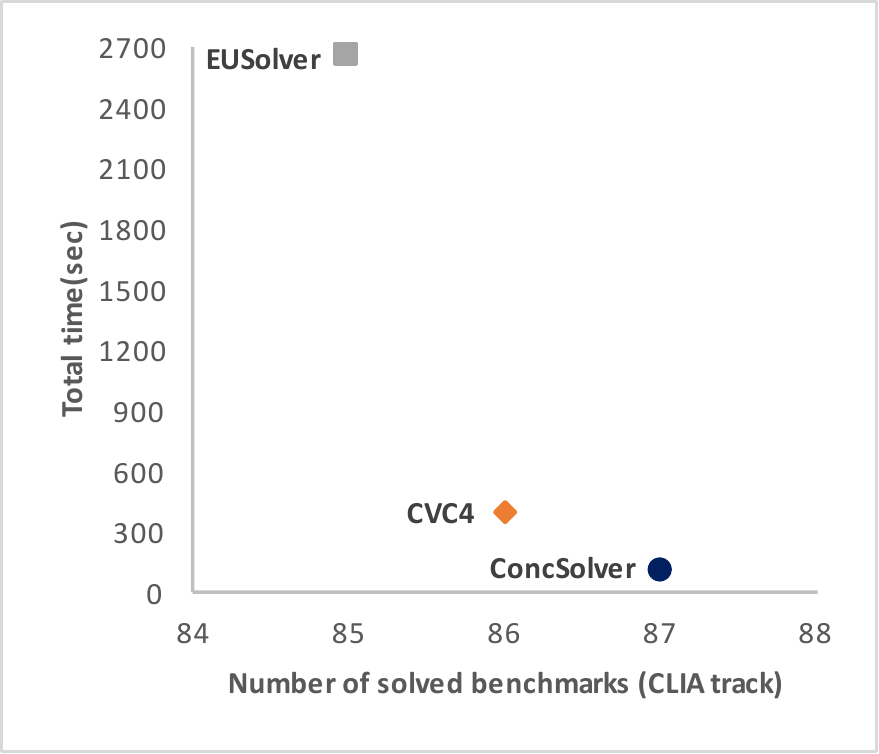}
%\end{center}
%\caption{Total solving time vs. \# solved benchmarks (CLIA track)}
%\label{fig:time-clia}
%\end{figure}
%
%\begin{figure}[htp!]
%\begin{center}
%\includegraphics[width=.8\columnwidth]{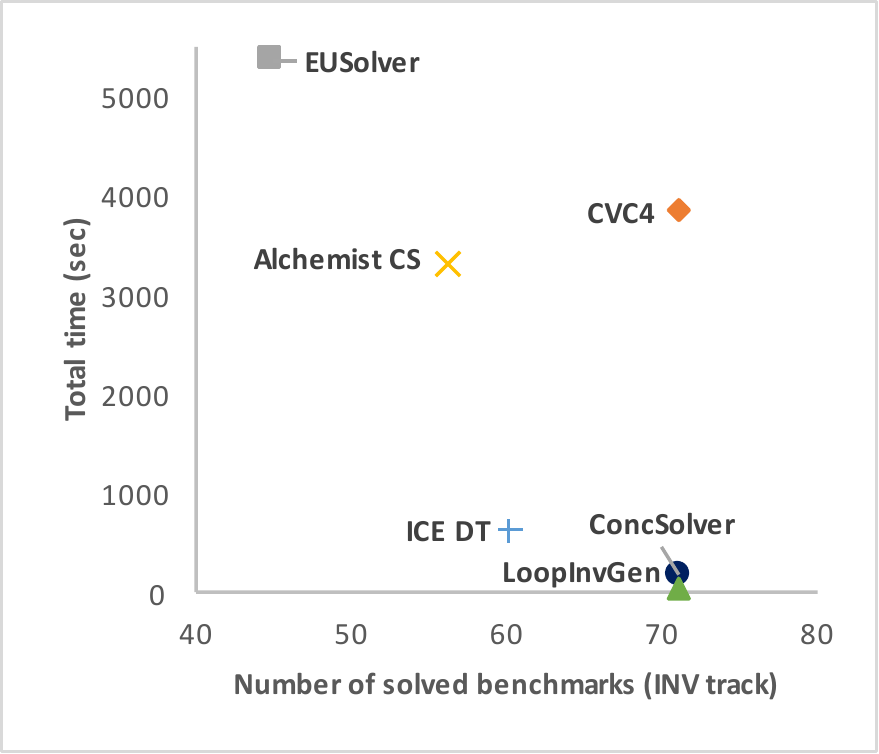}
%\end{center}
%\caption{Total solving time vs. \# solved benchmarks (INV track)}
%\label{fig:time-inv}
%\end{figure}

\begin{figure}[!htb]


\centering
\small
\begin{subfigure}[b]{.45\textwidth}
\centering
\includegraphics[width=\columnwidth]{spot_clia.pdf}
\caption{CLIA track}
\label{fig:time-clia}
\end{subfigure}
\hfill
\begin{subfigure}[b]{.45\textwidth}
\centering
\includegraphics[width=\columnwidth]{spot_inv.pdf}
\caption{INV track}
\label{fig:time-inv}
\end{subfigure}
\caption{Total solving time vs. \# solved benchmarks}
\label{fig:time}
\end{figure}

%\begin{figure}[htp!]
%\begin{center}
%\includegraphics[width=.8\columnwidth]{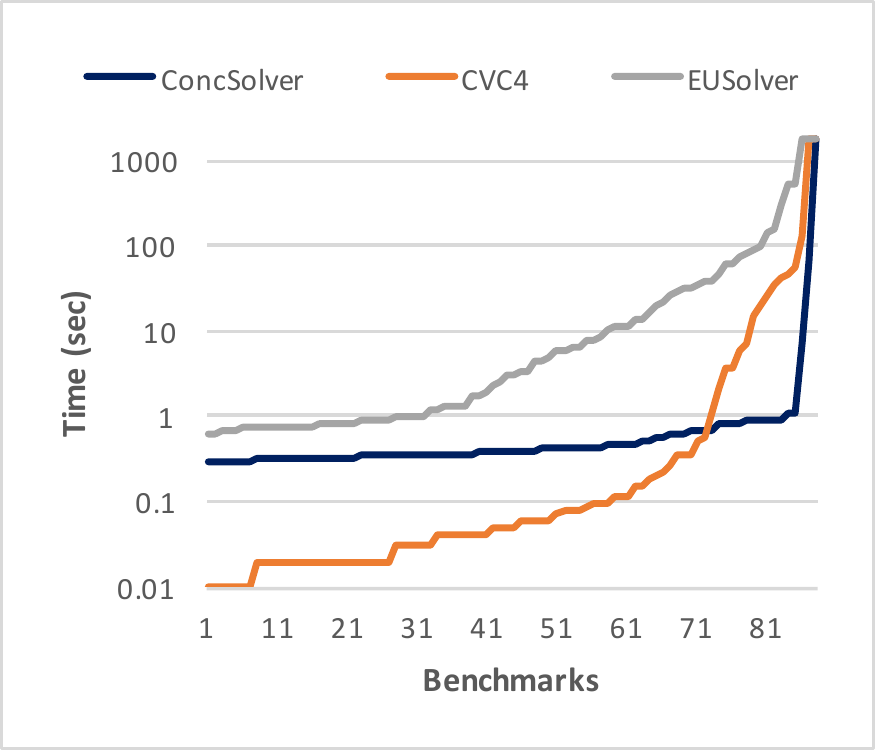}
%\end{center}
%\caption{Solving time per benchmark (CLIA track)}
%\label{fig:acc-clia}
%\end{figure}
%
%\begin{figure}[htp!]
%\begin{center}
%\includegraphics[width=.8\columnwidth]{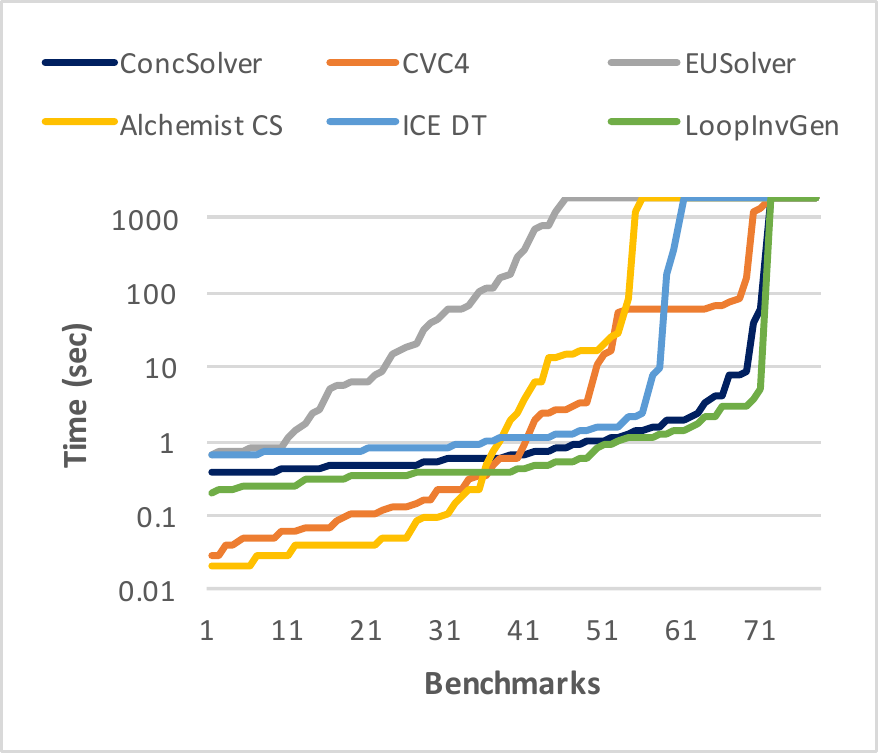}
%\end{center}
%\caption{Solving time per benchmark (INV track)}
%\label{fig:acc-inv}
%\end{figure}

\begin{figure}[!htb]


\centering
\small
\begin{subfigure}[b]{.45\textwidth}
\centering
\includegraphics[width=\columnwidth]{cum_clia.pdf}
\caption{CLIA track}
\label{fig:acc-clia}
\end{subfigure}
\hfill
\begin{subfigure}[b]{.45\textwidth}
\centering
\includegraphics[width=\columnwidth]{cum_inv.pdf}
\caption{INV track}
\label{fig:acc-inv}
\end{subfigure}
\caption{Solving time per benchmark}
\label{fig:acc}
\end{figure}

\begin{figure}
\centering
\footnotesize
%\begin{tabular*}{\textwidth}{|c|c|c|r|}
\begin{tabular}{|c|c|c|r|}
\hline
Benchmark & Category & Solved by & Time (sec) \\
\hline
\textsf{large\_linear\_func} & \textsf{CLIA(SSI)} & \name & 0.3 \\
\hline
\textsf{small\_linear\_func} & \textsf{CLIA(SSI)} & \name & 0.3 \\
\hline
\textsf{ex\_23\_vars} & \textsf{INV(AT)} & \name & 0.47 \\
\hline
\textsf{hola.07} & \textsf{INV(non-AT)} & \textsc{LoopInvGen} & 2.24 \\
\hline
\end{tabular}
\caption{Benchmarks solved uniquely by a solver}
\label{fig:unique}
\end{figure}

%\subsection{Implementation}

%We observed several things from the synthesis results for SyGuS benchmarks:
%\begin{enumerate}
%\item most of the $c_i$'s are very small. In fact, a lot of benchmarks don't need any coefficient beyond $0$, $1$, or $-1$. 
%\item when a benchmark's solution contains large coefficients, there is usually a large coefficient in the specification as well. 
%\item if the search space to restricted to solutions with small coefficients only, the performance can be significantly improved. 
%\end{enumerate}
%These observations allow us to optimize the synthesis algorithm by leveraging a multi-stage strategy. For each height $h$, we split the CEGIS loop into three stages:
%\begin{enumerate}
%\item in each synthesis iteration, we impose extra constraint $c_i^j \leq 1 \wedge c_i^j \geq -1$ for every $i$ and $j$. Once the synthesis query can't find any solution, move to the next stage --
%\item in each synthesis iteration, we impose extra constraint $c_i^j \leq D \wedge c_i^j \geq -D$ where $D$ is a bound such that all coefficients in $\varphi(f, \vec{x})$. Once the synthesis query can't find any solution, move to the next stage --
%\item resume the regular synthesis setting, i.e., remove all constraints on coefficients.
%\end{enumerate}
%
%We also implemented several simplification strategies. For example, a variable $x$ is irrelevant in an invariant synthesis problem if: a) $x$ does not appear in the $pre$ function or $x'$ does not appear in the $trans$ function; and b) $x$ does not appear in the $post$ function.
%
%\subsection{Experimental evaluation}

To evaluate our algorithms, we compared \name with several participant solvers of past two years' SyGuS competition in the CLIA and INV tracks: \textsc{Alchemist-CS}~\cite{alchemist}, CVC4~\cite{Reynolds2015}, \textsc{EUSolver}~\cite{Alur2017}, ICE-DT~\cite{ice-dt} and \textsc{LoopInvGen}~\cite{Padhi2017}. CVC4 won the CLIA track in both 2016 and 2017; ICE-DT and \textsc{LoopInvGen} won the INV track in 2016 and 2017, respectively~\cite{sygus16,sygus17}.
 
We adopted all 73 benchmarks from the CLIA track (out of which 59 fall in the SSI fragment) and all 74 benchmarks from the INV track (out of which 31 fall in the AT fragment). To make the set of benchmarks more diverse, we manually wrote 5 benchmarks with the SSI property as well as commutativity, and translated 4 benchmarks to the SyGuS interchange format from SV-COMP 2017~\cite{svcomp2017} and some programs of our own. We also extend the CLIA benchmarks with 10 larger-sized examples (e.g., extending \textsf{max15} to \textsf{max20}). The extensions wound up with a set of 166 benchmarks~\footnote{available at the anonymized link: \url{https://www.dropbox.com/s/wzvw4jxfq7yiz4z/benchmarks.zip?dl=0}}. Experiments were conducted on the StarExec platform~\cite{starexec}, on which each solver is executed on a 4-core, 2.4GHz CPU and 128GB memory node, with a 30-minute timeout.

We summarize the experimental results through a set of figures (see Appendix~\ref{sec:tables} for more details). Figure~\ref{fig:total} compares the number of benchmarks correctly solved within the 30-minute limit. Figure~\ref{fig:fastest} compares the number of benchmarks solved the fastest among all solvers~\footnote{Following the criterion of SyGuS competition, the time amounts are classified into buckets of pseudo-logarithmetic scales: $[0,1), [1,3), [3,10), \dots, [1000,1800)$.}. Both figures break down the numbers into four categories: \textsf{CLIA(SSI)} and \textsf{INV(AT)} represent benchmarks belonging to our SSI and AT fragments, respectively; \textsf{CLIA(non-SSI)} and \textsf{INV(non-AT)} represent the remaining CLIA and INV synthesis benchmarks, respectively. Figure~\ref{fig:time} shows the relative effectiveness and efficiency of the solvers in terms of the number of benchmarks solved and the total amount of time spent. Figure~\ref{fig:acc} shows the amounts of time spent by various benchmarks, sorted in ascending order. Figure~\ref{fig:unique} presents the benchmarks that were solved by one of the solvers uniquely.

Observing the figures, we are encouraged by the following facts: 1) overall, \name solved and solved fastest more benchmarks than any other solver; 2) our SSI synthesis algorithm solved \emph{all 74 benchmarks} in the category, among which 72 were solved fastest and 2 were solved uniquely; 3) our AT invariant synthesis algorithm solved \emph{all 35 benchmarks} in the category, among which 31 were solved fastest and 1 were solved uniquely; 4) \name solved one more CLIA benchmarks than the winner CVC4, and the same number of INV benchmarks as the winner {\sc LoopInvGen}, both with comparable performance.

In summary, our concolic synthesis framework tends to be an effective engine for a wide variety of synthesis problems, and holds promise of performing synthesis at a larger scale than possible before.

\section{Related Work}

\subsubsection{Syntax-guided synthesis.}

As we mentioned in Section~\ref{sec:intro}, the most important dimension along which we characterize existing syntax-guided synthesis approaches is their search strategies. For example, among state-of-the-art solvers participating recent years' SyGuS competition, EUSolver~\cite{Alur2017} adopts a purely enumerative search strategy, and CVC4~\cite{Reynolds2015} solves the synthesis problem purely symbolically through SMT solving. The Counterexample-Guided Inductive Synthesis (CEGIS) framework~\cite{program-sketching} has been a common theme underlying several solvers which differ in how the synthesizer generates candidates from counterexamples: Sketch~\cite{sketch:manual,sketchthesis} solves constraints encoding the counterexamples; Alchemist~\cite{alchemist} and ICE-DT~\cite{ice-dt} find likely candidates using learning algorithms. 

It is noteworthy that Sketch-AC~\cite{cav15,fmsd17} is an earlier attempt to combine enumerative and symbolic search, but in a very different way. As a Sketch-based tool, Sketch-AC supports a more general class of SyGuS problems than CLIA synthesis. Its algorithm statistically determines a class of \emph{highly influential unknowns} and explicitly enumerates all possible values of these unknowns. Unlike our decision-tree-based enumeration, their enumeration strategy is not specialized for CLIA synthesis and seems not very competitive in synthesizing CLIA functions~\cite{sygus16}.

The decision tree representation of functions to be synthesized has been used in the ICE-DT learning algorithm~\cite{ice-dt} and the enumerative search algorithm underlying EUSolver~\cite{Alur2017}. Both the two algorithms use decision trees to represent functions in order to leverage standard decision-tree-based learning algorithms to generate the whole function or some components. Our concolic synthesis framework uses decision trees in a different way: the tree representation encodes the task of synthesizing functions to the task of synthesizing coefficients, and the tree height is a natural measurement for enumeration.

A lot of syntax-guided synthesis algorithms have been developed in the past few years, including learning-based ones~\cite{alchemist,ice-cs,ice-dt} and enumerative ones~\cite{TRANSIT,unification}. However, they did not present any theoretical results such as termination, decidability, or complexity.
Caulfield~\etal~\cite{Caulfield} identified several decidable fragments of synthesis problems in the theories of EUF and bit-vectors, but they did not report any decidable or undecidable results for the theory of CLIA.

\subsubsection{Invariant synthesis.}

There is a rich literature of invariant synthesis and we only discuss the work mostly related to ours. Our AT invariant synthesis algorithm is distinctive from most of existing invariant synthesis approaches in the sense that our algorithm is terminating and complete, without assuming any syntactic template of the invariant. Constraint-based and abstract-interpretation-based approaches typically presume a shape of the invariant to be synthesized, e.g., octagon, polyhedra, conjunction-only constraints, etc. In contrast, CEGAR-based and interpolation-based approaches can synthesize arbitrary CLIA invariants but usually do not guarantee the completeness. Several recently developed invariant synthesis algorithms such as Hola~\cite{hola} or FiB~\cite{fib} bear similarities to ours: they all start from the precondition (postcondition) and repeatedly weakening (strengthening) the invariant until it becomes inductive. Our procedure is different from these algorithms as the translational transition allows us to accelerate the transformation by quantifier elimination, and guarantees the termination of the algorithm. 

In recent years, the CEGIS framework has been adopted by several invariant synthesis algorithms~\cite{ice-cs,ice-dt,Sharma2013,Li2017,Padhi2017}. All these algorithms leverage learning techniques in their backend synthesizer.

\subsubsection{Single Invocation Property.}

Our synthesis procedure for the SSI fragment is very relevant to previous procedures~\cite{Reynolds2015,Kuncak2010,Kuncak2012}. The SSI property strengthens the Single Invocation (SI) property proposed by Reynolds~\etal~\cite{Reynolds2015}. The main difference between SSI and SI is that the latter is more general and does not require the single occurrence of $f$ in each atomic formula. As a result, their CEGQI algorithm heuristically instantiates the singly invoked function with candidate terms and may not terminate. The software synthesis procedures developed by Kuncak~\etal~\cite{Kuncak2010,Kuncak2012} essentially handles CLIA synthesis problems with the SI property as well, but guarantees termination and completeness. Our procedure is also terminating and complete, and differs from theirs in two aspects. First, their procedure is more general and essentially solves CLIA synthesis problem with SI properties. The cost is DNF conversion and quantifier elimination, both leading to exponential blowups. Our approach is more restrictive and focuses on the SSI fragment only, leading to a more efficient polynomial-time procedure. Second, their quantifier-elimination-based procedure yields solutions involving floor and ceiling functions, which are not allowed in our CLIA synthesis problem.

\bibliographystyle{splncs03}

\bibliography{refs}

\appendix
%\section*{Appendices}
%See the supplemental material submitted with this paper.
%\section{Concrete and Abstract Semantics of \Lang}
\section{Experimental Results}
\label{sec:tables}
\begin{table}
\fontsize{8}{10}\selectfont
\begin{tabu} to \textwidth {|X[c,12]|X[r,7]|X[r,4]|X[r,6]|X[r,8]|X[r,5]|X[r,7]|}
\hline
Benchmark & ConcSolver & CVC4 & EUSolver & Alchemist CS & ICE DT & LoopInvGen \\
\hline
\sf{fg\_max2}& 0.3 & \textbf{0.01} & 1.35 & - & - & - \\ 
\sf{fg\_max3}& 0.29 & \textbf{0.02} & 0.83 & - & - & - \\ 
\sf{fg\_max4}& 0.32 & \textbf{0.02} & 1.01 & - & - & - \\ 
\sf{fg\_max5}& 0.48 & \textbf{0.04} & 0.96 & - & - & - \\ 
\sf{fg\_max6}& 0.33 & \textbf{0.06} & 3.06 & - & - & - \\ 
\sf{fg\_max7}& 0.34 & \textbf{0.12} & 2.57 & - & - & - \\ 
\sf{fg\_max8}& 0.36 & \textbf{0.2} & 1.95 & - & - & - \\ 
\sf{fg\_max9}& \textbf{0.35} & 0.37 & 3.31 & - & - & - \\ 
\sf{fg\_max10}& \textbf{0.37} & 0.57 & 4.64 & - & - & - \\ 
\sf{fg\_max11}& \textbf{0.37} & 1.05 & 9.86 & - & - & - \\ 
\sf{fg\_max12}& \textbf{0.38} & 2.13 & 13.75 & - & - & - \\ 
\sf{fg\_max13}& \textbf{0.38} & 3.79 & 14.22 & - & - & - \\ 
\sf{fg\_max14}& \textbf{0.39 }& 5.85 & 26.8 & - & - & - \\ 
\sf{fg\_max15}& \textbf{0.4} & 7.17 & 20.48 & - & - & - \\ 
\sf{fg\_max16}& \textbf{0.41} & 15.15 & 60.67 & - & - & - \\ 
\sf{fg\_max17}& \textbf{0.41} & 27.49 & 90.33 & - & - & - \\ 
\sf{fg\_max18}& \textbf{0.43} & 44.72 & 152.09 & - & - & - \\ 
\sf{fg\_max19}& \textbf{0.44} & 56.6 & 508.88 & - & - & - \\ 
\sf{fg\_max20}& \textbf{0.45} & 134.89 & 541.19 & - & - & - \\ 
\sf{fg\_array\_search\_2}& 0.34 & \textbf{0.01} & 0.75 & - & - & - \\ 
\sf{fg\_array\_search\_3}& 0.7 & \textbf{0.02} & 0.85 & - & - & - \\ 
\sf{fg\_array\_search\_4}& 0.34 & \textbf{0.02} & 1.03 & - & - & - \\ 
\sf{fg\_array\_search\_5}& 0.37 & \textbf{0.02} & 1.28 & - & - & - \\ 
\sf{fg\_array\_search\_6}& 0.4 & \textbf{0.04} & 1.68 & - & - & - \\ 
\sf{fg\_array\_search\_7}& 0.4 & \textbf{0.04} & 2.27 & - & - & - \\ 
\sf{fg\_array\_search\_8}& 0.44 & \textbf{0.04} & 3.21 & - & - & - \\ 
\sf{fg\_array\_search\_9}& 0.46 & \textbf{0.04} & 4.52 & - & - & - \\ 
\sf{fg\_array\_search\_10}& 0.52 & \textbf{0.05} & 6.35 & - & - & - \\ 
\sf{fg\_array\_search\_11}& 0.56 & \textbf{0.06} & 8.6 & - & - & - \\ 
\sf{fg\_array\_search\_12}& 0.65 & \textbf{0.07} & 11.78 & - & - & - \\ 
\sf{fg\_array\_search\_13}& 0.66 & \textbf{0.08} & 16.01 & - & - & - \\ 
\sf{fg\_array\_search\_14}& 0.87 & \textbf{0.09} & 21.48 & - & - & - \\ 
\sf{fg\_array\_search\_15}& 0.9 & \textbf{0.1} & 28.44 & - & - & - \\ 
\sf{fg\_array\_search\_16}& 0.83 & \textbf{0.11} & 36.05 & - & - & - \\ 
\sf{fg\_array\_search\_17}& 0.8 & \textbf{0.12} & 47.51 & - & - & - \\ 
\sf{fg\_array\_search\_18}& 0.93 & \textbf{0.15} & 59.91 & - & - & - \\ 
\sf{fg\_array\_search\_19}& 1.08 & \textbf{0.15} & 75.58 & - & - & - \\ 
\sf{fg\_array\_search\_20}& 1.08 & \textbf{0.18} & 94.57 & - & - & - \\
\hline
\end{tabu}
\caption{CLIA(SSI) Benchmarks Part 1}
\label{tbl:ssi_benchmark_1}
\end{table}

\begin{table}
\fontsize{8}{10}\selectfont
\begin{tabu} to \textwidth {|X[c,12]|X[r,7]|X[r,4]|X[r,6]|X[r,8]|X[r,5]|X[r,7]|}
\hline
Benchmark & ConcSolver & CVC4 & EUSolver & Alchemist CS & ICE DT & LoopInvGen \\
\hline
\sf{fg\_array\_sum\_2\_15}& 0.31 & \textbf{0.01} & 0.71 & - & - & - \\ 
\sf{fg\_array\_sum\_2\_5}& 0.29 & \textbf{0.01} & 0.7 & - & - & - \\ 
\sf{fg\_array\_sum\_3\_15}& 0.31 & \textbf{0.02} & 0.76 & - & - & - \\ 
\sf{fg\_array\_sum\_3\_5}& 0.31 & \textbf{0.02} & 0.81 & - & - & - \\ 
\sf{fg\_array\_sum\_4\_15}& 0.33 & \textbf{0.03} & 0.87 & - & - & - \\ 
\sf{fg\_array\_sum\_4\_5}& 0.34 & \textbf{0.03} & 0.87 & - & - & - \\ 
\sf{fg\_array\_sum\_5\_15}& 0.34 & \textbf{0.03} & 1.36 & - & - & - \\ 
\sf{fg\_array\_sum\_6\_15}& 0.37 & \textbf{0.04} & 5.92 & - & - & - \\ 
\sf{fg\_array\_sum\_6\_5}& 0.35 & \textbf{0.04} & 6.05 & - & - & - \\ 
\sf{fg\_array\_sum\_7\_15}& 0.39 & \textbf{0.05} & 7.75 & - & - & - \\ 
\sf{fg\_array\_sum\_7\_5}& 0.39 & \textbf{0.05} & 7.64 & - & - & - \\ 
\sf{fg\_array\_sum\_8\_15}& 0.41 & \textbf{0.06} & 11 & - & - & - \\ 
\sf{fg\_array\_sum\_8\_5}& 0.41 & \textbf{0.06} & 11.24 & - & - & - \\ 
\sf{fg\_array\_sum\_9\_15}& 0.42 & \textbf{0.08} & 31.73 & - & - & - \\ 
\sf{fg\_array\_sum\_9\_5}& 0.41 & \textbf{0.08} & 30.93 & - & - & - \\ 
\sf{fg\_array\_sum\_10\_5}& 0.45 & \textbf{0.1} & 39.18 & - & - & - \\ 
\sf{fg\_array\_sum\_10\_15}& 0.45 & \textbf{0.1} & 38.65 & - & - & - \\ 
\sf{fg\_mpg\_example1}& 0.32 & \textbf{0.02} & 0.83 & - & - & - \\ 
\sf{fg\_mpg\_example2}& 0.35 & \textbf{0.02} & 1.79 & - & - & - \\ 
\sf{fg\_mpg\_example3}& 0.36 & \textbf{0.04} & 3.06 & - & - & - \\ 
\sf{fg\_mpg\_example4}& 0.36 & \textbf{0.05} & 6.46 & - & - & - \\ 
\sf{fg\_mpg\_example5}& 0.39 & \textbf{0.03} & 4.66 & - & - & - \\ 
\sf{fg\_mpg\_guard1}& 0.31 & \textbf{0.01} & 0.81 & - & - & - \\ 
\sf{fg\_mpg\_guard2}& 0.32 & \textbf{0.02} & 0.87 & - & - & - \\ 
\sf{fg\_mpg\_guard3}& 0.32 & \textbf{0.02} & 0.87 & - & - & - \\ 
\sf{fg\_mpg\_guard4}& 0.32 & \textbf{0.02} & 0.88 & - & - & - \\ 
\sf{fg\_mpg\_ite1}& 0.33 & \textbf{0.02} & 0.96 & - & - & - \\ 
\sf{fg\_mpg\_ite2}& 0.34 & \textbf{0.02} & 1.3 & - & - & - \\ 
\sf{fg\_mpg\_plane1}& 0.32 & \textbf{0.02} & 0.76 & - & - & - \\ 
\sf{fg\_mpg\_plane2}& 0.33 & \textbf{0.01} & 0.77 & - & - & - \\ 
\sf{fg\_mpg\_plane3}& 0.33 & \textbf{0.01} & 0.77 & - & - & - \\ 
\sf{diff}& \textbf{0.34} & 20.19 & 144.38 & - & - & - \\ 
\sf{large\_linear\_func}& \textbf{0.3} & $\infty$ & $\infty$ & - & - & - \\ 
\sf{large}& \textbf{0.29} & 0.51 & 1.17 & - & - & - \\ 
\sf{small\_linear\_func}& \textbf{0.3} & $\infty$ & $\infty$ & - & - & - \\ 
\sf{small}& 0.3 & \textbf{0.28} & 1.19 & - & - & - \\ 
\hline
\end{tabu}
\caption{CLIA(SSI) Benchmarks Part 2}
\label{tbl:ssi_benchmark_2}
\end{table}

\begin{table}
\fontsize{8}{10}\selectfont
\begin{tabu} to \textwidth {|X[c,12]|X[r,7]|X[r,4]|X[r,6]|X[r,8]|X[r,5]|X[r,7]|}
\hline
Benchmark & ConcSolver & CVC4 & EUSolver & Alchemist CS & ICE DT & LoopInvGen \\
\hline
\sf{fg\_fivefuncs} & 0.67 & \textbf{0.02} & 0.73 & - & - & - \\ 
\sf{fg\_sixfuncs} & 0.71 & \textbf{0.03} & 0.73 & - & - & - \\ 
\sf{fg\_sevenfuncs} & 0.79 & \textbf{0.02} & 0.74 & - & - & - \\ 
\sf{fg\_eightfuncs} & 0.83 & \textbf{0.02} & 0.75 & - & - & - \\ 
\sf{fg\_ninefuncs} & 0.92 & \textbf{0.03} & 0.76 & - & - & - \\ 
\sf{fg\_tenfunc1} & 0.88 & \textbf{0.35} & 0.78 & - & - & - \\ 
\sf{fg\_tenfunc2} & 0.94 & \textbf{0.35} & 0.79 & - & - & - \\ 
\sf{fg\_polynomial} & $\infty$ & \textbf{0.02} & 0.65 & - & - & - \\ 
\sf{fg\_polynomial1} & 0.57 & \textbf{0.02} & 0.65 & - & - & - \\ 
\sf{fg\_polynomial2} & 0.61 & \textbf{0.06} & 0.71 & - & - & - \\ 
\sf{fg\_polynomial3} & 0.51 & \textbf{0.23} & 1.01 & - & - & - \\ 
\sf{fg\_polynomial4} & \textbf{0.65} & 40.94 & 311.78 & - & - & - \\ 
\sf{fg\_VC22\_a} & 7.23 & \textbf{3.57} & 79 & - & - & - \\ 
\sf{fg\_VC22\_b} & 64.9 & \textbf{34.04} & $\infty$ & - & - & - \\ 
\hline
\end{tabu}
\caption{CLIA(non-SSI) Benchmarks}
\label{tbl:clia_benchmark}
\end{table}

\begin{table}
\vspace{-.22in}
\fontsize{8}{10}\selectfont
\begin{tabu} to \textwidth {|X[c,12]|X[r,7]|X[r,4]|X[r,6]|X[r,8]|X[r,5]|X[r,7]|}
\hline
Benchmark & ConcSolver & CVC4 & EUSolver & Alchemist CS & ICE DT & LoopInvGen \\
\hline
\sf{cegar1\_vars-new} & 7.35 & \textbf{0.35} & $\infty$ & 1.01 & 0.83 & 0.4 \\ 
\sf{cegar1\_vars} & 0.59 & 0.34 & 63.5 & \textbf{0.03} & 1.16 & 0.38 \\ 
\sf{cegar1-new} & 7.35 & 0.16 & $\infty$ & \textbf{0.09} & 0.76 & 0.83 \\ 
\sf{cegar1} & 0.6 & 0.16 & 8.04 & \textbf{0.04} & 0.71 & 0.4 \\ 
\sf{cggmp-new} & 0.46 & \textbf{0.05} & $\infty$ & $\infty$ & 391.96 & 5 \\ 
\sf{cggmp} & 0.5 & \textbf{0.04} & error & 82.92 & 1.55 & 3.62 \\ 
\sf{dec\_simpl-new\footnotemark}  & 0.41 & \textbf{0.05} & $\infty$ & $\infty$ & $\infty$ & 0.39 \\ 
\sf{dec\_simpl} & 0.4 & 0.08 & 2.45 & \textbf{0.04} & 0.77 & 0.34 \\ 
\sf{dec\_vars-new\footnotemark[\value{footnote}]}  & 0.4 & \textbf{0.05} & $\infty$ & $\infty$ & $\infty$ & 0.35 \\ 
\sf{dec\_vars} & 0.41 & 0.23 & 16.77 & \textbf{0.05} & 0.82 & 0.34 \\ 
\sf{dec-new} & 0.4 & 0.07 & 0.7 & \textbf{0.03} & 0.64 & 0.25 \\ 
\sf{dec} & 0.4 & 0.07 & 0.7 & \textbf{0.03} & 0.63 & 0.19 \\ 
\sf{ex\_23\_vars} & \textbf{0.47} & error & $\infty$ & $\infty$ & $\infty$ & $\infty$ \\ 
\sf{ex23} & \textbf{0.48} & 1199 & $\infty$ & $\infty$ & 1.42 & $\infty$ \\ 
\sf{ex7\_vars} & 0.45 & 10.69 & 782.16 & \textbf{0.04} & 0.78 & 0.9 \\ 
\sf{ex7} & 0.44 & 2.99 & 17.46 & \textbf{0.04} & 0.73 & 0.94 \\ 
\sf{hola\_add} & 0.48 & 0.47 & 1.8 & \textbf{0.04} & $\infty$ & 0.44 \\ 
\sf{hola\_countud} & 0.47 & 60.4 & 287.59 & 3.65 & $\infty$ & \textbf{0.44} \\ 
\sf{inc\_simp} & 0.42 & 0.05 & 4.96 & \textbf{0.04} & 0.76 & 0.36 \\ 
\sf{inc\_vars} & 0.4 & 0.1 & 30.21 & \textbf{0.04} & 0.78 & 0.34 \\ 
\sf{inc} & 0.39 & \textbf{0.07} & $\infty$ & 1272.11 & 2.35 & 0.24 \\ 
\sf{sum1\_vars} & \textbf{0.47} & 60.84 & $\infty$ & $\infty$ & 1.48 & 0.5 \\ 
\sf{sum1} & \textbf{0.47} & 60.42 & 783.89 & 17.12 & 1 & 0.49 \\ 
\sf{sum3\_vars} & 0.39 & \textbf{0.03} & 0.69 & 0.05 & 0.79 & 0.25 \\ 
\sf{sum3} & 0.41 & \textbf{0.05} & 0.68 & \textbf{0.05} & 0.78 & 0.24 \\ 
\sf{sum4\_simp} & 0.46 & 60.39 & $\infty$ & 14.3 & 0.99 & \textbf{0.45} \\ 
\sf{sum4\_vars} & 0.47 & 60.83 & $\infty$ & $\infty$ & 1.15 & \textbf{0.45} \\ 
\sf{sum4} & 0.45 & \textbf{0.04} & $\infty$ & 6.38 & 2.02 & 0.21 \\ 
\sf{tacas\_vars} & 0.57 & 166.3 & 116.4 & \textbf{0.1} & 1.26 & 0.54 \\ 
\sf{tacas} & 0.55 & 16.74 & 116.95 & \textbf{0.45} & 1.15 & 0.57 \\ 
\sf{w1} & 0.43 & 0.06 & 0.72 & \textbf{0.02} & 0.69 & 0.4 \\ 
\sf{minor1} & 0.59 & 0.56 & error & error & error & \textbf{0.3} \\ 
\sf{minor2} & 8.9 & 61.3 & error & error & error & \textbf{0.38} \\ 
\sf{vardep} & 0.6 & 61.36 & error & error & error & \textbf{0.23} \\ 
\sf{minor3} & 1.51 & 0.94 & error & error & error & \textbf{0.34} \\ 
\hline
\end{tabu}
\caption{INV(AT) Benchmarks}
\label{tbl:at_benchmark}
\end{table}
\footnotetext{These 2 benchmarks do not have a valid solution, while ConcSolver, CVC4 and LoopInvGen have explicitly indicated this situation, the other 3 solvers hang infinitely. So we count the 3 solvers that come up with a no-solution response as successfully solved these benchmarks.}

\begin{table}
\fontsize{8}{10}\selectfont
\begin{tabu} to \textwidth {|X[c,12]|X[r,7]|X[r,4]|X[r,6]|X[r,8]|X[r,5]|X[r,7]|}
\hline
Benchmark & ConcSolver & CVC4 & EUSolver & Alchemist CS & ICE DT & LoopInvGen \\
\hline
\sf{anfp-new} & 1.06 & 3.39 & 103.98 & 21.36 & $\infty$ & \textbf{0.3}\\ 
\sf{anfp} & 1.43 & 3.39 & 42.85 & 13.2 & $\infty$ & \textbf{0.25} \\ 
\sf{array\_simp} & 2.01 & 0.14 & 20.22 & \textbf{0.09} & 0.78 & 1.5 \\ 
\sf{array\_vars} & 1.87 & \textbf{0.3} & 60.66 & 15.81 & 0.7 & 2.84 \\ 
\sf{array-new} & 1.03 & 0.13 & 6.25 & \textbf{0.08} & 171.72 & 0.37 \\ 
\sf{array} & 1.46 & 0.13 & 6.46 & \textbf{0.09} & 0.93 & 0.51 \\ 
\sf{cegar2\_vars-new} & 3.27 & 78.58 & $\infty$ & 1784.45 & 1.54 & \textbf{1.09} \\ 
\sf{cegar2\_vars} & 3.94 & 75.38 & $\infty$ & $\infty$ & \textbf{1.13} & 1.26 \\ 
\sf{cegar2-new} & 3.89 & 66.06 & 394.41 & 5.97 & \textbf{1.1} & \textbf{1.1} \\ 
\sf{cegar2} & 2.37 & 64.78 & 1202.13 & 13.52 & \textbf{1.09} & 1.26 \\ 
\sf{ex11\_simpl-new} & 0.64 & 0.07 & 6.26 & \textbf{0.05} & 0.73 & 3.09 \\ 
\sf{ex11\_simpl} & 1.4 & 1.89 & 740.53 & \textbf{0.15} & 0.75 & 0.3 \\ 
\sf{ex11\_vars-new} & 0.88 & 0.13 & 176.83 & \textbf{0.04} & 0.72 & 3.1 \\ 
\sf{ex11\_vars} & $\infty$ & error & $\infty$ & $\infty$ & $\infty$ & error \\ 
\sf{ex11-new} & 0.68 & 2.3 & $\infty$ & 14.8 & 1.46 & \textbf{0.24} \\ 
\sf{ex11} & 1.3 & 2.33 & error & \textbf{0.04} & 0.71 & 0.23 \\ 
\sf{ex14\_simp} & 0.69 & 0.1 & 0.8 & \textbf{0.02} & 0.67 & 1.39 \\ 
\sf{ex14\_vars} & 0.69 & 0.22 & 1.39 & \textbf{0.02} & 0.79 & 1.4 \\ 
\sf{ex14-new} & 0.59 & 0.11 & 0.77 & \textbf{0.03} & 0.84 & 0.33 \\ 
\sf{ex14} & 0.6 & 0.1 & 0.83 & \textbf{0.03} & 0.67 & 0.38 \\ 
\sf{fig1\_vars-new} & 0.82 & 2.66 & 59.49 & 25.39 & 9.8 & \textbf{0.4} \\ 
\sf{fig1\_vars} & 0.91 & 2.67 & 59.22 & 28.54 & 0.87 & \textbf{0.4} \\ 
\sf{fig1-new} & 0.69 & 0.58 & 5.33 & \textbf{0.23} & 7.76 & 0.39 \\ 
\sf{fig1} & 0.8 & 0.58 & 5.46 & \textbf{0.18} & 0.86 & 0.39 \\ 
\sf{fig3\_vars} & 1.07 & \textbf{0.06} & 163.15 & 16.85 & 0.73 & 0.33 \\ 
\sf{fig3} & 1.04 & \textbf{0.06} & 9.04 & 0.74 & 0.74 & 0.35 \\ 
\sf{fig9\_vars} & 0.66 & 0.12 & 37.12 & \textbf{0.04} & 0.69 & 0.25 \\ 
\sf{fig9} & 0.98 & \textbf{0.03} & 2.52 & 0.04 & 0.67 & 0.31 \\ 
\sf{formula22} & 0.39 & 60.08 & $\infty$ & 1.81 & 1.22 & \textbf{0.24} \\ 
\sf{formula25} & 0.39 & 60.08 & $\infty$ & $\infty$ & 1.24 & \textbf{0.3} \\ 
\sf{formula27} & 2.11 & 60.08 & $\infty$ & $\infty$ & 2.22 & \textbf{0.3 }\\ 
\sf{hola.05} & 1.9 & 15.17 & 14.35 & 2.43 & \textbf{0.83} & 0.97 \\ 
\sf{hola.07} & $\infty$ & $\infty$ & $\infty$ & $\infty$ & $\infty$ & \textbf{2.24} \\ 
\sf{hola.20} & $\infty$ & 60.4 & $\infty$ & $\infty$ & $\infty$ & \textbf{1.74} \\ 
\sf{hola.41} & 62.53 & 1400.93 & $\infty$ & $\infty$ & 1.35 & \textbf{0.59} \\ 
\sf{hola.44} & 40.61 & 56.26 & error & \textbf{0.23} & 0.91 & 2.11 \\ 
\sf{matrix2\_simp} & $\infty$ & error & $\infty$ & $\infty$ & $\infty$ & error \\ 
\sf{matrix2} & $\infty$ & error & $\infty$ & $\infty$ & $\infty$ & $\infty$ \\ 
\sf{treax1} & 0.49 & 0.09 & 0.8 & \textbf{0.02} & 0.63 & 1.14 \\ 
\sf{trex1\_vars} & 0.56 & 0.21 & 1.12 & \textbf{0.02} & 0.63 & 1.14 \\ 
\sf{trex3\_vars} & $\infty$ & error & $\infty$ & $\infty$ & $\infty$ & error \\ 
\sf{trex3} & $\infty$ & error & $\infty$ & $\infty$ & $\infty$ & error \\ 
\sf{vsend} & 0.54 & 0.21 & 0.8 & \textbf{0.02} & 1.13 & 3.08 \\ 
\hline
\end{tabu}
\caption{INV(non-AT) Benchmarks}
\label{tbl:at_benchmark}
\end{table}

\end{document}